\newtheorem{theorem}{Theorem}
\newtheorem{definition}{Definition}
\newtheorem{lemma}{Lemma}
\begin{document}
%

\title{IMRank: Influence Maximization via Finding Self-Consistent Ranking}
%
%
%
%
%

\numberofauthors{1} 
%
\author{
%
%
\alignauthor
Suqi Cheng, Hua-Wei Shen, Junming Huang, Wei Chen, Xue-Qi Cheng\\
      \affaddr{Institute of Computing Technology, Chinese Academy of Sciences, Beijing, China}\\
       \email{\{chengsuqi, shenhuawei, huangjunming, chenwei2012, cxq\}@ict.ac.cn}
}

\maketitle
%
%

\begin{abstract}

Influence maximization, fundamental for word-of-mouth marketing and viral marketing, aims to find a set of seed nodes maximizing influence spread on social network. Early methods mainly fall into two paradigms with certain benefits and drawbacks: (1)Greedy algorithms, selecting seed nodes one by one, give a guaranteed accuracy relying on the accurate approximation of influence spread with high computational cost; (2)Heuristic algorithms, estimating influence spread using efficient heuristics, have low computational cost but unstable accuracy.

We first point out that greedy algorithms are essentially finding a self-consistent ranking, where nodes' ranks are consistent with their ranking-based marginal influence spread. This insight motivates us to develop an iterative ranking framework, i.e., IMRank, to efficiently solve influence maximization problem under independent cascade model. Starting from an initial ranking, e.g., one obtained from efficient heuristic algorithm, IMRank finds a self-consistent ranking by reordering nodes iteratively in terms of their ranking-based marginal influence spread computed according to current ranking. We also prove that IMRank definitely converges to a self-consistent ranking starting from any initial ranking. Furthermore, within this framework, a last-to-first allocating strategy and a generalization of this strategy are proposed to improve the efficiency of estimating ranking-based marginal influence spread for a given ranking. In this way, IMRank achieves both remarkable efficiency and high accuracy by leveraging simultaneously the benefits of greedy algorithms and heuristic algorithms. As demonstrated by extensive experiments on large scale real-world social networks, IMRank always achieves high accuracy comparable to greedy algorithms, with computational cost reduced dramatically, even about $10-100$ times faster than other scalable heuristics.

%

\end{abstract}

\category{F.2.2}{Analysis of Algorithms and Problem
Complexity}{Non-numerical Algorithms and Problems}
\category{D.2.8}{Software Engineering}{Metrics}[complexity measures, performance measures]
\terms{Algorithms, Experiments, Performance}
\keywords{influence maximization, social network analysis, viral marketing, iterative method}

\section{Introduction}

The prosperity of online social networks and social media invokes a new wave of research on social influence analysis~\cite{Tang2009,Huang2012}.
Finding influential individuals is important for many applications such as expert finding, online advertising and marketing. Therefore, influence maximization is identified as a fundamental problem for word-of-mouth marketing and viral marketing in the area of online marketing. It aims to find a fixed-size set of seed nodes in social network to maximize their \emph{influence spread}, i.e., the expected number of activated nodes triggered by the seed nodes. Ever since being formalized by Kempe et al.~\cite{Kempe2003}, influence maximization problem has attracted much research attention from various fields, including social network analysis, data mining and marketing.

Early methods for influence maximization mainly use greedy framework, selecting one by one the node with the largest marginal influence spread. With calculating influence spread  accurately, the greedy framework is proved to provide a $(1-1/e)$ approximation to the optimal solution of influence maximization~\cite{Kempe2003}, guaranteed by the submodularity and monotonicity properties of influence spread as a function of seed node set. These methods roughly fall into two paradigms: greedy algorithms~\cite{Kempe2003,Leskovec2007,Chen2009,Goyal2011,Cheng2013} and heuristic algorithms~\cite{Kimura2010,Chen2010,Wang2010,Jung2012}. Greedy algorithms provide a $(1-1/e-\epsilon)$ approximation by approximating influence spread through Monte Carlo simulation. However, they have high computation cost because the calculation of marginal influence spread invokes estimating the influence spread of nodes from scratch, using time-consuming Monte Carlo simulation. The latter, in contrast, resorts to estimate the influence spread via efficient heuristic methods. The scalability of these heuristics generally outperforms the greedy algorithms by several orders of magnitude.  Yet, their high scalability is gained with the pain of unguaranteed accuracy and unreliable performance on various scenarios. To the best of our knowledge, we lack an efficient and accurate algorithm of influence maximization for applications to large scale social networks in real world.

In this paper, we propose an efficient and accurate algorithm to solve influence maximization problem under independent cascade model. This algorithm is motivated by the key insight that greedy algorithms are essentially finding a self-consistent ranking, where nodes' ranks are consistent with their ranking-based marginal influence spread. We prove that such self-consistent ranking can be obtained directly using an iterative ranking framework, i.e., IMRank, proposed in this paper. Starting from an initial ranking, e.g., one obtained from efficient heuristic algorithm, IMRank efficiently finds a self-consistent ranking by reordering nodes iteratively in terms of their ranking-based marginal influence spread computed according to current ranking. Different from greedy algorithms computing ranking-based marginal influence spread from scratch, IMRank conducts the computation of ranking-based marginal influence spread via an efficient last-to-first allocating strategy. As a result, IMRank achieves both high efficiency and high accuracy by leveraging simultaneously the benefits of greedy algorithms and heuristic algorithms. 

To evaluate the performance of IMRank, we conduct extensive experiments on large-scale social networks with hundreds of thousands of edges to millions of edges. Experimental results demonstrate that IMRank achieves high accuracy comparable to greedy algorithms with computational cost reduced dramatically. 

Our main contributions are summarized as follows:

\begin{itemize}
  \item We propose a novel framework IMRank, which unifies the estimation of marginal influence spread and the selection of seed nodes. IMRank achieves both remarkable efficiency and high accuracy by exploiting the interplay between the calculation of ranking-based marginal influence spread and the ranking of nodes.

  \item We prove that IMRank, starting from any initial ranking, definitely converges to a self-consistent ranking in a finite number of steps. This indicates that IMRank is efficient at solving the influence maximization problem via finding the final self-consistent ranking.

  \item We design an efficient last-to-first allocating strategy to approximately estimate the ranking-based marginal influence spread of nodes for a given ranking, further improving the efficiency of IMRank.

  \item We conduct extensive experiments on several real-world networks under different types of the independent cascade model. Through comparing two instances of IMRank with both greedy algorithm and existing state-of-the-art heuristics, we show that IMRank always achieves comparable accuracy to the greedy algorithm while runs $10-100$ times faster than other heuristics with better accuracy.
\end{itemize}

\section{Related Work}\label{section:relatedwork}

\begin{table*}[t]
\centering \caption{Notations.} \label{table:natations}
\begin{tabular}{|c|p{400pt}|}
\hline
Notation            &   Description \\ \hline
$v_i$               &   a node with index $i$ \\
$r_i$               &   the index of node with rank $i$ with respect to a given ranking $r$   \\  
$S=\{v_1,v_2,\dots,v_n\}$   &   a set of nodes   \\
$I(S)$              &   expected number of nodes eventually activated by set $S$ \\
$M(v|S)$            &   marginal influence spread by adding node $v$ into a seed set $S$ \\
$M_r(v)$            &   short for $M(v|\{v_1,v_2,\dots,v_{i-1}\})$, where $\{v_1,v_2,\dots,v_{i-1}\}$ is the set of nodes ranked higher than $v$ in a given ranking $r$ \\
$p(v_i|\{v_1,v_2,\dots,v_{i-1}\})$  &   probability that $v_i$ is activated given that a collection of nodes $\{v_1,v_2,\dots,v_{i-1}\}$ are already activated   \\
$\eta_r(v_i,v_j)$   &   influence score that node $v_i$ sends to node $v_j$ with respect to a given ranking $r$ \\
$d(v_j,v_i)$      &     a simple path starting from $v_j$ and ending at $v_i$, i.e., $\{w_1=v_j,w_2,\dots,w_n=v_i\}$ \\
$d_r(v_j,v_i)$      &   influence path, which is a simple path where $v_j$ is the only node ranked higher than $v_i$ on the path\\
$\rho_r(v_i,v_j)$   &   probability that $v_i$ is activated by $v_j$ through all influence paths, with respect to a given ranking $r$ \\
$l$                 &   maximal length of all influence paths to account into \\
\hline
\end{tabular}
\end{table*}

Influence maximization problem was first studied by Domingos and Richardson from algorithmic perspective~\cite{Domingos2001,Richardson2002}. Kempe et al. then formulated it as a combinatorial optimization problem of finding a set of seed nodes with maximum influence spread~\cite{Kempe2003}. They proved that this problem is NP-hard and proposed a greedy algorithm which can guarantee a $(1-1/e-\epsilon)$ approximation ratio. Here, $\epsilon$ is caused by the inaccurate estimation of influence spread~\cite{Chen2010}~\cite{Chen2010b}. The biggest problem suffered by Kempe's greedy algorithm is its low scalability, limiting it to social networks with small or moderate size.

Many efforts have been made to improve the scalability of Kempe's greedy algorithm for influence maximization. ``cost-effective lazy forward'' (CELF) optimization strategy~\cite{Leskovec2007} and CELF++~\cite{Goyal2011} are proposed to reduce the times of influence spread estimation in Kempe's greedy algorithm by exploiting the submodularity property of influence spread function. To reduce the number of Monte Carlo simulations, Chen et al. proposed NewGreedy algorithm and MixedGreedy algorithm in~\cite{Chen2009}. The NewGreedy algorithm reusing the results of Monte Carlo simulations in the same iteration to calculate marginal influence spread for all candidate nodes. Yet, it increases the computational cost for a single Monte Carlo simulation because the simulation is now conducted globally rather than locally as done in Kempe's greedy algorithm. As a remedy, the MixedGreedy algorithm was developed, integrating the CELF strategy into the NewGreedy algorithm. Sheldon et al.~\cite{Sheldon2010}  proposed a sample average approximation approach from stochastic optimization for maximizing the spread of cascades under budget restriction. Cheng et al. proposed a static greedy algorithm~\cite{Cheng2013}, reducing the number of Monte-Carlo simulations through strictly guaranteeing the submodularity and monotonicity properties of influence spread function. Although these improvements can speedup the original greedy algorithm in several orders of magnitude, scalability is still a big challenge for greedy algorithms because the guaranteed accuracy of these algorithms relies on a huge number of Monte Carlo simulations.

Heuristic algorithms, in contrast, mainly reduce the complexity of Kempe's greedy algorithm through computing influence spread heuristically. DegreeDiscount, designed for uniform independent cascade model, only computes direct influence~\cite{Chen2009}. Community-based greedy algorithm conducted Monte Carlo simulation within each community rather than on the whole network~\cite{Wang2010}.  SPM/SP1M algorithms~\cite{Kimura2010} estimated influence spread according to shortest paths, while PMIA algorithm~\cite{Chen2010} used maximum influence paths. SP1N algorithm employed the concept of Shapley value from the cooperative game theory~\cite{Narayanam2011}. IRIE algorithm efficiently estimated marginal influence spread through an iterative method. Besides the above heuristics using greedy approach, Jiang et al. proposed a simulated annealing approach with several heuristics~\cite{Jiang2011}, and Mathioudakis et al. suggested to speed up influence maximization using a simplified influence network~\cite{Mathioudakis2011}. However, these heuristics cannot give rise to guaranteed accuracy and their performance is unstable on different networks and diffusion models.

Taken together, in existing algorithms for influence maximization, the estimation of influence spread and the ranking of nodes are studied separately. On one hand, without leveraging the ranking of nodes, greedy algorithms estimate the influence spread of nodes from scratch, causing high computational cost. On the other hand, lacking a reliable estimation of influence spread, heuristic algorithms have no guaranteed accuracy. Hence, in this paper, we improve the state-of-the-art solution of influence maximization problem by exploiting the interplay between marginal influence spread and the ranking of nodes.

\section{Self-consistent ranking}\label{section:consistentrank}

For influence maximization on a social network $G=(V,E)$, \emph{influence spread function} $I(S)$ of a node set $S\subseteq V$ is defined as the expected number of nodes in $G$ eventually activated by $S$ under certain diffusion model. The function $I(\cdot)$ is nonnegative, monotone, and submodular, satisfying
\begin{itemize}
  \item Nonnegative: $I(S)\geq 0$;
  \item Monotone: $I(S) \leq I(T)$, if $S\subseteq T \subseteq V$;
  \item Submodular: $I(S\cup \{v\})-I(S)\geq I(T\cup \{v\})-I(T)$, for all $v\in V$ and $S\subseteq T \subseteq V$.
\end{itemize}
These properties guarantee that a fair approximation to the optimal solution of influence maximization can be obtained by greedy algorithms, iteratively selecting the node with maximum marginal influence spread as seed node.

\begin{definition}{\textbf{Marginal influence spread:}}
Given a node set $S\subseteq V$ and a node $v\in V$, the marginal influence spread of $v$ upon $S$ is defined as $M(v|S) = I(S \cup \{v\})-I(S)$.
\end{definition}

However, the influence spread function is not extensive, i.e., $I(S\cup\{v\})\neq I(S)+I(\{v\})$ if $v\notin S$, since the nodes activated by $S$ may overlap with the nodes activated by $v$. Therefore, one has to compute the marginal influence spread by computing both $I(S)$ and $I(S\cup\{v\})$ from scratch, resulting in huge computation cost. To remedy this problem, we further analyze the property of the set of seed nodes obtained by greedy algorithms. Indeed, greedy algorithms implicitly give a ranking of nodes, where nodes are ranked in decreasing order of their marginal influence spread. Meanwhile, their marginal influence spread are computed based on their ranks in the implicit ranking. Hence, greedy algorithms obtain a \emph{self-consistent ranking} of nodes.

Before formally defining self-consistent ranking, we first introduce several related notations for clarity. Without loss of generality, we index all the nodes into $\{v_1,v_2,\cdots,v_n\}$ where $n=|V|$. A ranking of nodes, determined by a permutation $(r_1,r_2,\cdots,r_n)$ with $r_i\in \{1,2,\cdots,n\}$ denoting the index of node with rank $i$, is denoted as $r=\{v_{r_1},v_{r_2},\cdots,v_{r_n}\}$. With these notations, for convenience, we now define the ranking-based marginal influence spread of node with respect to a ranking $r$ as $M_r(v_{r_i}) = M(v_{r_{i}}|\{v_{r_1},v_{r_2},\cdots,v_{r_{i-1}}\})$. In addition, for clarity, Table~\ref{table:natations} lists all the important notations used in this paper.

\begin{definition}{\textbf{Self-consistent ranking:}}
A ranking $r$ is a self-consistent ranking iff $M_r(v_{r_i}) \geq M_r(v_{r_j}), \forall 1\leq i<j\leq n$.
\end{definition}

For the set of seed nodes obtained by greedy algorithms, there exists an interplay between the ranks of nodes and their marginal influence spread. On one hand, these nodes are ranked in descending order of their marginal influence spread. On the other hand, the marginal influence spread of nodes is calculated with respect to the ranks of nodes. Indeed, the set of seed nodes obtained by greedy algorithms forms a self-consistent ranking.

\begin{theorem}\label{lemma:self-consistentrank}
\textbf{Greedy algorithms for influence maximization gives a self-consistent ranking.}
\end{theorem}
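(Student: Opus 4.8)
The plan is to show that the permutation $(r_1, r_2, \dots, r_n)$ induced by the greedy selection order satisfies the defining inequality of a self-consistent ranking, namely $M_r(v_{r_i}) \geq M_r(v_{r_j})$ for every pair $1 \leq i < j \leq n$. Throughout, write $S_k = \{v_{r_1}, v_{r_2}, \dots, v_{r_k}\}$ for the set of the first $k$ chosen nodes, so that by definition $M_r(v_{r_i}) = M(v_{r_i}\mid S_{i-1})$. Fix an arbitrary pair of ranks $i < j$.

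First I would exploit the greedy rule itself. At iteration $i$ the algorithm selects, among all nodes not yet chosen, the one with the largest marginal influence spread upon the already-selected set $S_{i-1}$. Since $j > i$, node $v_{r_j}$ has not yet been selected when $v_{r_i}$ is chosen, so it is a competitor in that iteration; the greedy rule therefore guarantees
\[
M_r(v_{r_i}) = M(v_{r_i}\mid S_{i-1}) \geq M(v_{r_j}\mid S_{i-1}).
\]
The second step closes the gap between the conditioning set $S_{i-1}$ appearing above and the conditioning set $S_{j-1}$ appearing in the definition of $M_r(v_{r_j})$. Because $i < j$ implies the prefix nesting $S_{i-1} \subseteq S_{j-1}$, the submodularity of $I(\cdot)$ yields $M(v_{r_j}\mid S_{i-1}) \geq M(v_{r_j}\mid S_{j-1}) = M_r(v_{r_j})$. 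Chaining the two inequalities gives $M_r(v_{r_i}) \geq M_r(v_{r_j})$, and since $i < j$ were arbitrary, the greedy ranking is self-consistent.

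The conceptual crux --- and the one step that genuinely requires the structure of $I(\cdot)$ --- is recognizing that the greedy optimality inequality and the self-consistency inequality compare the two nodes against \emph{different} reference sets: greedy directly compares $v_{r_i}$ and $v_{r_j}$ at the common set $S_{i-1}$, whereas self-consistency evaluates each node at its own prefix $S_{i-1}$ and $S_{j-1}$. Submodularity is precisely the property that reconciles these two viewpoints, as it forces the marginal gain of $v_{r_j}$ to be monotonically non-increasing when the conditioning set grows from $S_{i-1}$ to $S_{j-1}$. I would expect no further technical obstacle here; notably, the monotonicity and nonnegativity of $I(\cdot)$ are not even invoked in the argument, so the result rests entirely on the greedy selection rule together with submodularity.
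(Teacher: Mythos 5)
Your proof is correct and follows exactly the paper's own argument: first use the greedy selection rule to get $M(v_{r_i}\mid S_{i-1}) \geq M(v_{r_j}\mid S_{i-1})$, then use submodularity with the prefix nesting $S_{i-1}\subseteq S_{j-1}$ to get $M(v_{r_j}\mid S_{i-1}) \geq M(v_{r_j}\mid S_{j-1})$, and chain the two. The only difference is expository --- your closing remark that monotonicity and nonnegativity are never invoked is a nice observation the paper does not make.
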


\begin{proof}
Greedy algorithms iteratively select the node with maximum marginal influence spread as seed node. With a ranking $r$ denoting the order seed nodes are selected, we have $M(v_{r_i}|\{v_{r_1},v_{r_2},\cdots,v_{r_{i-1}}\}) \geq M(v_{r_j}|\{v_{r_1},v_{r_2},\dots,\\v_{r_{i-1}}\})$, for $i<j$. In addition,
the submodularity of influence spread function implies that $M(v_{r_j}|\{v_{r_1},v_{r_2},\cdots,v_{r_{i-1}}\}) \\\geq M(v_{r_j}|\{v_{r_1},v_{r_2},\cdots,v_{r_{j-1}}\})$. Using transitivity, we complete the proof with $M_r(v_{r_i}) = M(v_{r_i}|\{v_{r_1},v_{r_2},\cdots,v_{r_{i-1}}\})\geq M(v_{r_j}|\{v_{r_1},v_{r_2},\cdots,v_{r_{j-1}}\}) = M_r(v_{r_{j}})$.
\end{proof}

For a given social network, however, there are multiple self-consistent rankings besides the one obtained by greedy algorithms. Hence it is critical to develop effective algorithms to achieve a desired self-consistent ranking which is either the very ranking obtained by greedy algorithms or comparable to it from the point of influence maximization.

\section{IMRank}\label{section:IMRank}


In this section, we develop an efficient iterative framework IMRank to solve the influence maximization problem through finding a desired self-consistent ranking. IMRank distinguishes itself from greedy algorithms in one key point: in each iteration, IMRank efficiently estimates the marginal influence spread of all nodes based on the current ranking, while greedy algorithm compute the marginal influence spread from scratch with high computational cost. 

\subsection{IMRank: iterative framework}\label{subsection:descriptionofIMRank}


IMRank aims to find a self-consistent ranking from any initial ranking. It achieves the goal by iteratively adjusting current ranking as follows:
\begin{itemize}
  \item Compute the ranking-based marginal influence spread of all nodes $M_r$ with respect to the current ranking $r$;
  \item Obtain a new ranking by sorting all nodes according to $M_r$.
\end{itemize}
This iterative process is formally described in Algorithm~\ref{algorithm:IMRank}. It definitely converges to a self-consistent ranking, starting from any initial ranking (see Section~\ref{subsection:convergence} for proof). Intuitively, IMRank iteratively promotes influential nodes to top positions in the ranking, always increasing the influence spread of top-$k$ nodes during the process until it converges to a self-consistent ranking. Indeed, different initial rankings could make IMRank converge to different self-consistent rankings. We leave the discussion about initial ranking to Section~\ref{subsection:initialranking}.

\begin{algorithm}[t]
\caption{IMRank $(r)$}\label{algorithm:IMRank}
\begin{algorithmic}[1]
    \STATE $r^{(0)}=r$
    \STATE $t \leftarrow 0$
    \REPEAT
        \STATE $t \leftarrow t+1$
        \STATE Calculate $M_{r^{(t)}}$ with respect to the ranking $r$
        \STATE Generate a new ranking $r^{(t)}$ by sorting nodes in decreasing order according to $M_{r^{(t)}}$
    \UNTIL $r^{(t)}=r^{(t-1)}$
    \STATE output the self-consistent ranking $r^{(t)}$
\end{algorithmic}
\end{algorithm}
\subsection{Calculate ranking-based marginal influence spread}\label{subsection:calculate M}

%


The core step in IMRank is the calculation of ranking-based marginal influence spread. One straightforward way is to directly compute $M_r(v_{r_i})=M(v_{r_i}|\{v_{r_1},v_{r_2},\cdots,v_{r_{i-1}}\})$ using Monte Carlo simulation, as done by greedy algorithms. However, prohibitively high computational cost makes it impractical for IMRank.
To combat this problem, we propose a \emph{Last-to-First Allocating} (LFA) strategy to efficiently estimate $M_r$, leveraging the intrinsic interdependence between ranking and ranking-based marginal influence spread. We develop the LFA strategy under the widely-adopted independent cascade model~\cite{Kempe2003}. For the independent cascade model, when a node $u$ is activated, it has one chance to independently activate its neighboring nodes with a propagation probability $p(u,v)$ if $v$ has not been activated yet. Each node can be activated for only once.

The LFA strategy is based on the following fact: by definition, the ranking-based marginal influence spread $M_r(v)$ is equal to the expected number of nodes activated by $v$, given that when all nodes ranked higher than it have finished the propagation of their influence. This implies two basic rules under the calculation of $M_r(v)$:
\begin{enumerate}
  \item Each node can only be activated by nodes ranked higher than it in the given ranking;
  \item When a node could be activated by multiple nodes, higher-ranked node has higher priority to activate it.
\end{enumerate}

Following the two basic rules, the LFA strategy is described as follows:
\begin{itemize}
  \item Given a ranking $r$, the initial value of $M_r(v_{r_i})$ of each node is set to be $1$, satisfying the fact that the sum of $M_r(v_{r_i})$ over all nodes is equal to the number of nodes, since each node can only be activated once.
  \item Scanning the ranking from the last node to the top one, a fraction of $M_r(v_{r_i})$ is delivered to the nodes ranked higher than $v_{r_i}$, reflecting the first rule;
  \item The delivered influence score of $M_r(v_{r_i})$ is allocated among the nodes $v_j (j<i)$ in terms of their ranks, reflecting the second rule.

      Specifically, with $\eta(v_{r_j},v_{r_i})$ denoting the fraction of influence score delivered to node $v_{r_j}$ from node $v_{r_i}$, we have $\eta(v_{r_j},v_{r_i})=$
\begin{equation}\label{equation:allocation}
\left\{
\begin{aligned}
   & M_r(v_{r_i})p(v_{r_j}, v_{r_i})\prod_{k: 1\leq k<j}{\big(1-p(v_{r_k},v_{r_i})\big)}, j<i, \\
   & 0, \hspace{1.9in} \mbox{otherwise}.
\end{aligned}
\right.
\end{equation}
where $p(v_{r_j}, v_{r_i})$ is the propagation probability that node $v_{r_j}$ directly activates node $v_{r_i}$, known as \textit{a priori} for independent cascade model.
\end{itemize}
The calculation of the ranking-based marginal influence spread $M_r$ is completed after all nodes are scanned. The LFA strategy is formally depicted in Algorithm~\ref{algorithm:rankbasedinfluence}.
\begin{algorithm}[t]
\caption{Calculate $\mathbf{M_r}$($r$)}\label{algorithm:rankbasedinfluence}
\begin{algorithmic}[1]
    \FOR{$i=1$ to $n$}
      \STATE $M_r(v_{r_i}) \leftarrow 1$
    \ENDFOR
    \FOR{$i=n$ to $2$}
       \FOR{$j=1$ to $i$}
            \STATE $M_r(v_{r_j})$ $\leftarrow $ $M_r(v_{r_j})$ + $p(v_{r_j},v_{r_i}) \times M_r(v_{r_i})$
            \STATE $M_r(v_{r_i})$ $\leftarrow $ $\big(1-p(v_{r_j},v_{r_i})\big) \times M_r(v_{r_i})$
       \ENDFOR
    \ENDFOR
   \STATE output $M_r$
\end{algorithmic}
\end{algorithm}


Now we use an example to illustrate the LFA strategy. In Figure~\ref{fig:last-to-first-strategy}, $v_k$ denotes the node with rank $k$ for convenience, and $p_{i,j}$ is the propagation probability along edge $\langle v_i,v_j\rangle$. Here, the ranking is simply $r=\{1,2,3,4,5\}$. Solid lines represent the edges where influence could propagate, while dashed lines depict the edges where influence score is delivered when nodes are scanned. The lack of dashed line from node $v_3$ to node $v_2$ reflects that node $v_2$ is ranked higher than node $v_3$. For this case, the LFA strategy computes the ranking-based marginal influence spread as follows:
\begin{enumerate}
  \item Initially, $M_r(v_i)=1$ $1\leq i\leq 5$.
  \item Node $v_5$ is then scanned as the last node in the ranking. According to Equation(~\ref{equation:allocation}), $v_5$ delivers $p_{3,5}M_r(v_5)=p_{3,5}$ to $v_3$ and $p_{4,5}(1-p_{3,5})$ to $v_4$ respectively. Accordingly, $M_r(v_5)$ becomes $(1-p_{4,5})(1-p_{3,5})$.
  \item Then node $v_4$ is scanned. Since $M_r(v_4)$ is now $1+p_{4,5}(1-p_{3,5})$, $v_4$ delivers $p_{2,4}+p_{2,4}p_{4,5}(1-p_{3,5})$ to $v_2$. Note that the second item characterizes the influence of $v_2$ to $v_5$ through the path $\langle v_2,v_4,v_5\rangle$, reflecting that the LFA strategy could effectively capture the indirect influence among nodes. After $v_4$ is scanned, the final value of $M_r(v_4)$ is $(1-p_{2,4})(1+p_{4,5}(1-p_{3,5}))$.
  \item When node $v_3$ is scanned. it delivers $p_{1,3}(1+p_{3,5})$ to node $v_1$, with $(1-p_{1,3})(1+p_{3,5})$ remained.
  \item Finally, node $v_2$ is scanned. After $v_2$ is scanned, the final scores of $M_r(v_2)$ and $M_r(v_1)$ are $(1-p_{1,2})(1+p_{2,4}+p_{2,4}p_{4,5}(1-p_{3,5}))$ and $1+p_{1,2}(1+p_{2,4}+p_{2,4}p_{4,5}(1-p_{3,5}))+p_{1,3}(1+p_{3,5})$ respectively. The term $p_{1,2}p_{2,4}p_{4,5}(1-p_{3,5})$ in $M_r(v_1)$ captures the indirect influence from $v_1$ to $v_5$ through the path $\langle v_1,v_2,v_4,v_5\rangle$, indicating that the LFA strategy does collect influence with multiple intermediate nodes on the path. Note that it is not necessary to scan node $v_1$ since it does not delivery influence to other nodes
\end{enumerate}

The above illustration tells us that the LFA strategy efficiently calculates the ranking-based marginal influence spread for all nodes, scanning each node only once. Meanwhile, with indirect influence propagation being effectively captured, the LFA strategy provides a good delegate to calculate ranking-based marginal influence spread. We show the numerical results of the LFA strategy and 20,000 Monte Carlo simulations in the case of setting $p_{u,v}=0.2$ for all edges as done in uniform independent cascade model. As shown in Table~\ref{table:mrnew}, our strategy offers very close results to the time-consuming Monte Carlo simulations.

Finally, we summarize the LFA strategy by explaining why it works remarkably. First, it achieves its high efficiency by exploiting the interdependence between ranking and ranking-based marginal influence spread, avoiding the adoption of Monte Carlo simulations done in greedy algorithms. Second, it employs the intermediate nodes as delegates, in a last-to-first manner, to capture both direct and indirect influence propagation among nodes. In this way, ranking-based marginal influence spread could be efficiently calculated via scanning all nodes only once. In addition, we want to spell out that the LFA strategy only offers one effective approximation rather than exact calculation of influence spread. This is partly caused by the restriction that influence could only propagate from higher-ranked nodes to lower-ranked nodes. In Section~\ref{section:advance}, we will further improve the LFA strategy via relaxing this restriction.

\begin{figure}[t]
\centering
\includegraphics[width=0.4 \linewidth]{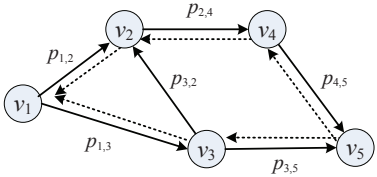}
\caption{Illustration of the LFA strategy.}
\label{fig:last-to-first-strategy}
\end{figure}


\begin{table}
\centering \caption{Estimation on ranking-based marginal influence spread. MC indicates Monte Carlo simulation, and LAF indicates the LAF strategy.}\label{table:mrnew}
\begin{tabular}{llllll}
\toprule
     & $v_1$  & $v_2$ & $v_3$ & $v_4$ & $v_5$\\
\midrule
MC  & 1.29846 &  1.38800 & 0.77941 & 0.89406  & 0.64007  \\
LAF            & 1.24000 & 1.42400  & 0.76800  & 0.92800 & 0.64000\\
\bottomrule
\end{tabular}
\end{table}


\subsection{Convergence of IMRank}\label{subsection:convergence}

In this section, we first theoretically prove the convergence of IMRank. Then we illustrate the convergence empirically using a real-word network as example.

\begin{theorem}\label{theorem:localswitching}
\textbf{Starting from any initial ranking of nodes, IMRank converges to a self-consistent ranking after a finite number of iterations.}
\end{theorem}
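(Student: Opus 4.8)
The plan is to reduce the statement to a finiteness argument driven by a monotone potential. First I would observe that the stopping condition $r^{(t)}=r^{(t-1)}$ holds \emph{exactly} when the current ranking is self-consistent: sorting the nodes of a ranking $r$ by decreasing $M_r$ returns $r$ itself precisely when $M_r(v_{r_1})\ge M_r(v_{r_2})\ge\cdots\ge M_r(v_{r_n})$, which is the definition of self-consistency. Hence a fixed point is automatically a self-consistent ranking, and it suffices to show the iteration reaches a fixed point after finitely many steps. Since there are only $n!$ distinct rankings, it is enough to exhibit a bounded real-valued function that strictly increases at every iteration in which the ranking actually changes.

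The potential I would use is $\Phi(r)=\sum_{k=1}^{n} I(\{v_{r_1},\dots,v_{r_k}\})$, the sum of the influence spreads of all prefixes of $r$; monotonicity gives $I(S)\le I(V)=n$, so $\Phi\le n^2$. Two facts organize the argument. By telescoping and $I(\emptyset)=0$, the prefix influence satisfies the identity $I(\{v_{r_1},\dots,v_{r_k}\})=\sum_{i=1}^{k}M_r(v_{r_i})$. The crucial ingredient is the set inequality $I(S)\ge\sum_{v\in S}M_r(v)$, valid for \emph{every} $S\subseteq V$, which I would prove by listing $S=\{u_1,\dots,u_m\}$ in the order induced by $r$ and noting $\{u_1,\dots,u_{j-1}\}\subseteq\{w: w \text{ ranked above } u_j \text{ in } r\}$; submodularity then yields $M(u_j\mid\{u_1,\dots,u_{j-1}\})\ge M_r(u_j)$, and summing over $j$ gives the claim.

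Now let $r'$ be the ranking produced from $r$ in one iteration, i.e. nodes sorted in decreasing order of $M_r$. For each $k$ the set $\{v_{r'_1},\dots,v_{r'_k}\}$ consists of the $k$ nodes of largest $M_r$-value, so $\sum_{i=1}^{k}M_r(v_{r'_i})\ge\sum_{i=1}^{k}M_r(v_{r_i})$. Chaining the set inequality with the telescoping identity gives $I(\{v_{r'_1},\dots,v_{r'_k}\})\ge\sum_{i=1}^{k}M_r(v_{r'_i})\ge\sum_{i=1}^{k}M_r(v_{r_i})=I(\{v_{r_1},\dots,v_{r_k}\})$, hence $\Phi(r')\ge\Phi(r)$ termwise. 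If $r$ is not self-consistent there is an adjacent inversion $M_r(v_{r_j})<M_r(v_{r_{j+1}})$, so the $k=j$ term is strict and $\Phi(r')>\Phi(r)$. A bounded potential that strictly increases on every genuine change forces all rankings in the sequence to be distinct; with finitely many rankings the process must halt, and by the fixed-point characterization it halts at a self-consistent ranking.

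The main obstacle, and the step I would treat with the most care, is that a single iteration both \emph{reorders} the nodes and \emph{recomputes} every $M_r$ with respect to the new ranking, so a naive rearrangement-inequality argument comparing old $M_r$-values across orderings cannot by itself bound the recomputed $\Phi(r')$. The inequality $I(S)\ge\sum_{v\in S}M_r(v)$ is exactly what bridges this gap: it lower-bounds the recomputed prefix influence of $r'$ by a sum of old marginal values, after which the greedy selection of the $k$ largest values closes the comparison. I would therefore regard establishing this inequality from submodularity as the technical heart, with everything else being the telescoping bookkeeping and the standard ``bounded monotone sequence over a finite state space'' conclusion. As a sanity check I would fix any deterministic tie-breaking in the sort so that $r'$ is well defined, and verify that neither the fixed-point characterization nor the strictness argument is affected by ties.
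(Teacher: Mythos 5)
Your proof is correct and follows essentially the same route as the paper's: both show that every prefix influence spread $I(\{v_{r_1},\dots,v_{r_k}\})$ is nondecreasing under one iteration and strictly increases somewhere when $r$ is not self-consistent, and your key lemma $I(S)\ge\sum_{v\in S}M_r(v)$ is exactly what the paper establishes via its auxiliary ranking $r''$ (listing the top-$k$ nodes of $r'$ in $r$-order and invoking submodularity). Your packaging---collapsing the prefix spreads into one scalar potential $\Phi$ and concluding via the finiteness of the $n!$ rankings rather than via boundedness of $I(k)$---is a slightly more rigorous write-up of the paper's final step, but not a different argument.
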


\begin{proof}
We first prove that, with respect to any $k$, the influence spread of the set of top-$k$ nodes, denoted as $I(k)$ for convenience, is nondecreasing in the iterative process of IMRank. After each iteration of IMRank, a ranking $r$ is adjusted to another ranking $r'$. Since IMRank adjusts all nodes in decreasing order of their current ranking-based influence spread $M_r(v)$, the values of $M_r(v_{r'_i})$($1\leq i\leq k$) are the largest $k$ values among all the $M_r(v)$. Hence, there is $I_r(k)=\sum_{1\leq i\leq k}{M_r(v_{r_i})}\leq\sum_{1\leq i\leq k}{M_r(v_{r'_i})}$. Moreover, $I_r(k)=\sum_{1\leq i\leq k}{M_r(v_{r'_i})}$ iff the sets of top-$k$ nodes in ranking $r$ and $r'$ are the same, otherwise $I_r(k)<\sum_{1\leq i\leq k}{M_r(v_{r'_i})}$. Now let's consider a new ranking $r''$ obtained from just reordering the top-$k$ nodes in ranking $r'$ in decreasing order of their ranks in ranking $r$ and keeping the ranks of other nodes still. Apparently, the sets of top-$k$ nodes are the same between ranking $r'$ and $r''$, thus $I_{r'}(k)=I_{r''}(k)$. Then, for each node $v_{r'_i}$, the set of nodes ranked higher than it in ranking $r''$ is definitely a subset of the set of nodes ranked higher than it in ranking $r$. According to the submodularity of influence spread function, we can obtain $M_{r}(v_{r'_i})\leq M_{r''}(v_{r'_i})$ for each node $v_{r'_i}$ ($1\leq i\leq k$). Thus, there is $\sum_{1\leq i\leq k}{M_r(v_{r'_i})}\leq\sum_{1\leq i\leq k}{M_{r''}(v_{r'_i})}=I_{r''}(k)$. Note we have proved $I_r(k)\leq\sum_{1\leq i\leq k}{M_r(v_{r'_i})}$ and $I_{r'}(k)=I_{r''}(k)$. Taken together, we can obtain $I_{r}(k)\leq I_{r'}(k)$, and the equal-sign is tenable iff the sets of the top-$k$ nodes in ranking $r$ and $r'$ are the same, otherwise $I_{r}(k)< I_{r'}(k)$.

Based on the above conclusion, as long as the current ranking is not a self-consistent ranking, in each iteration all the values of $I(k)$($1\leq k\leq n$) are nondecreasing, and at least one $I(k)$ increases. Since $1\leq k\leq n$ and $I(k)$ for each $k$ has an upper bound (i.e., $n$), IMRank eventually converges to a self-consistent ranking within a finite number of iterations, starting from any initial ranking.
\end{proof}

In fact, the above proof also explains the effectiveness of IMRank that it consistently improves the influence spread of top-$k$ nodes for any $k$, resulting in a quick convergence which is much faster than greedy algorithms.
We now empirically illustrate the convergence of IMRank, using a scientific collaboration network, namely HEPT, extracted from the ``High Energy Physics-Theory'' section of the e-print arXiv website arXiv.org. This network is composed of $15K$ nodes and $59K$ edges. We run IMRank to select 50 seed nodes. Figure~\ref{fig:hep:convergence_set_rank} shows the percent of different nodes in two successive iterations. For two widely-used models, weighted independent cascade (WIC) model~\cite{Kempe2003} and trivalency independent cascade (TIC) model~\cite{Chen2010}, the set of top-$50$ nodes becomes unchanged after $5$ and $8$ iterations respectively. Clearly, IMRank converges significantly quicker than greedy algorithms, which requires $k$ iteration for selecting $k$ seed nodes. Figure~\ref{fig:hep:convergence_influencespread} depicts the influence spread of top-$50$ nodes. For convenience, we employ the relative influence spread, i.e., the ratio of the influence spread of top-$50$ nodes in each iteration to the final influence spread obtained when IMRank converges. IMRank only takes $3$ and $5$ iterations to achieve a stable and high influence spread under the two models respectively. The influence spread of top-$k$ nodes always converges with smaller number of iterations than the convergence of the set of top-$k$ nodes. Therefore, one can stop IMRank safely in practice by checking the change of top-$k$ nodes between two successive iterations. 

In sum, we have theoretically and empirically demonstrated the convergence of IMRank. Indeed, the convergence of IMRank could be affected by the estimation of marginal influence spread. Extensive experiments further show IMRank with the LFA strategy always converge quickly in Section~\ref{section:experiment}.

\begin{figure}[t]
\centering
\subfigure[Top-$50$ nodes]
{\label{fig:hep:convergence_set_rank} 
\includegraphics[width=0.48 \linewidth]{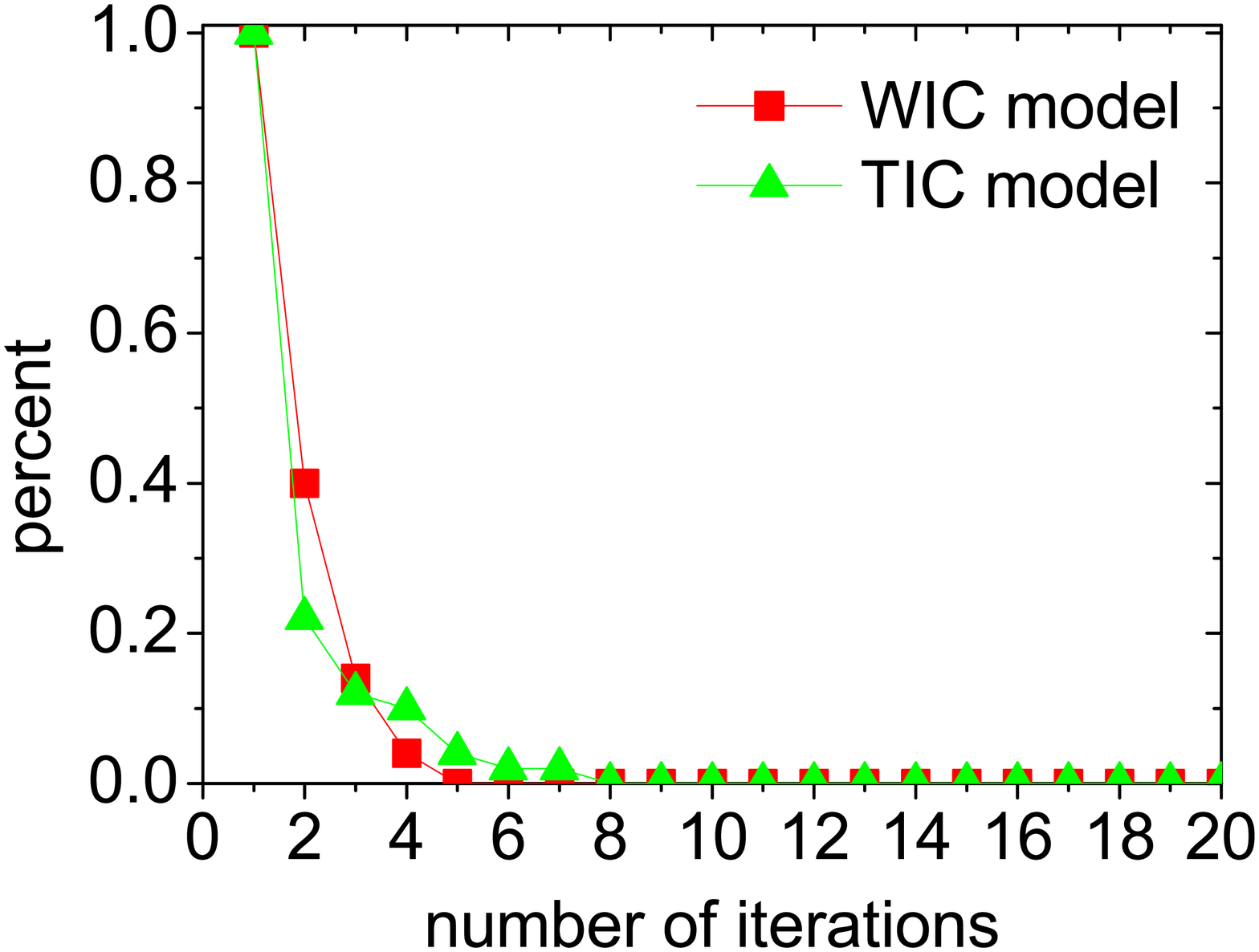}}
\subfigure[Influence spread]
{\label{fig:hep:convergence_influencespread}
\includegraphics[width=0.48 \linewidth]{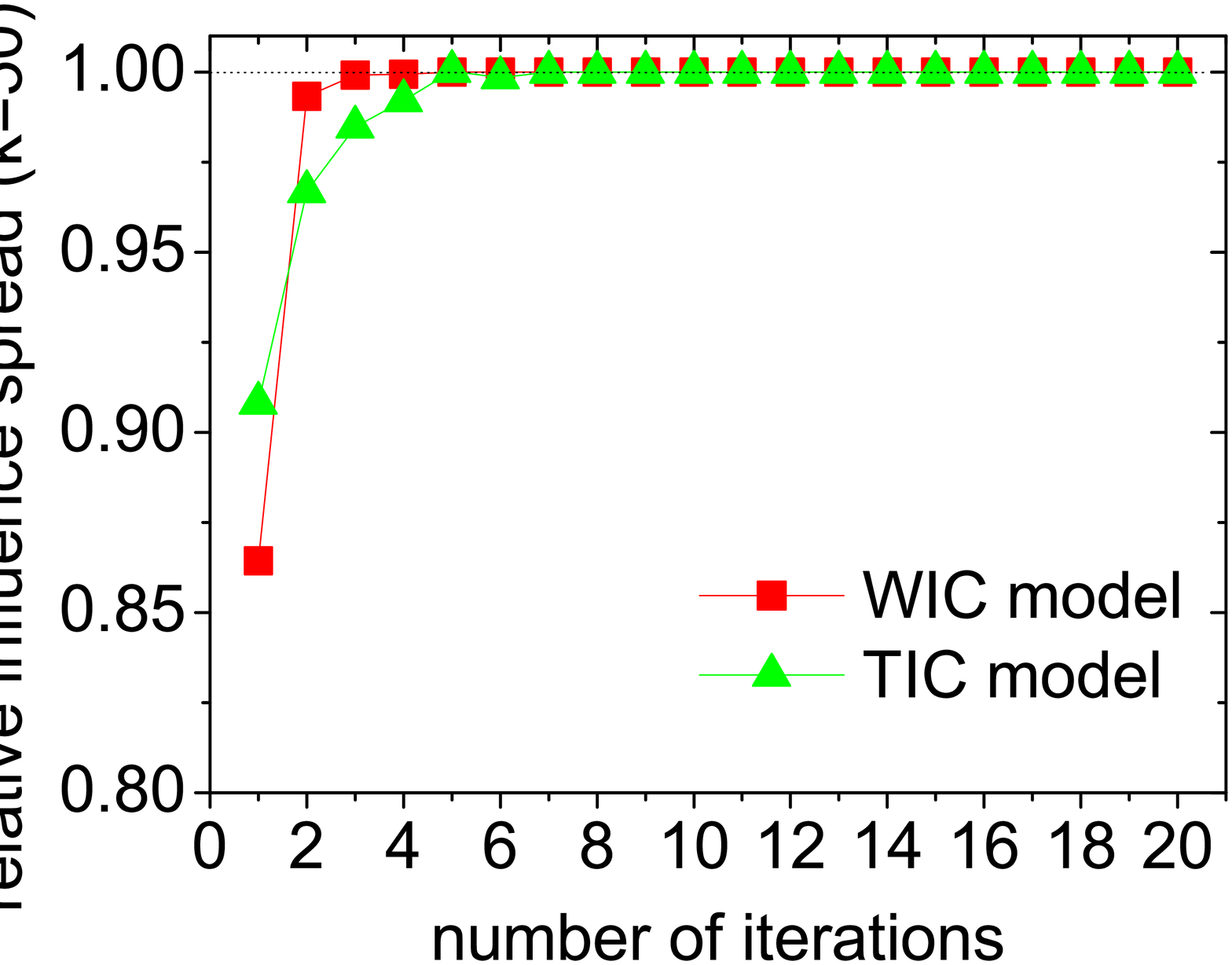}}
\caption{\label{fig:convergence:hep} Convergence of IMRank}
\end{figure}
\subsection{Analysis of initial ranking}\label{subsection:initialranking}

Since IMRank is guaranteed to converge to a self-consistent ranking from any initial ranking, it is necessary to extend the discussion to its dependence on the initial ranking: does an arbitrary initial ranking results in a unique convergence? If not, what initial ranking corresponds to a better result? We explore those questions by empirically simulating IMRank with five typical initial rankings as follows,

\begin{itemize}
  \item \textbf{Random}: Nodes are initially ranked randomly;
  \item \textbf{Degree}: Nodes are initially ranked in descending order of degrees (undirected networks) or out-degrees (directed networks);
  \item \textbf{InversedDegree}: Nodes are initially ranked in ascending order of degrees (undirected networks) or out-degrees (directed networks);
  \item \textbf{Strength}: Nodes are initially ranked in descending order of node strengths (undirected networks) or node out-strengths (directed networks). The node strength is the sum of all weights on its edges. The node out-strength is the sum of all weights on its out-edges;
  \item \textbf{PageRank}: Nodes are initially ranked in descending order of PageRank scores~\cite{Brin1998}, with the default value $0.15$ for the damping factor parameter.
\end{itemize}

Empirical results on the HEPT dataset under the WIC model are reported in Figure~\ref{fig:initialSort}, to compare the performance of IMRank with different initial rankings, as well as the performance of those rankings alone. We also report the performance of classic greedy algorithm for comparison, implemented with CELF optimization~\cite{Leskovec2007}. Performance of IMRank with Random initial ranking, and that of the Random ranking alone, are averaged over $50$ trials.

With the empirical results we conclude:
\begin{itemize}
  \item With different initial rankings, IMRank could converge to different self-consistent rankings. However, IMRank consistently improves the initial rankings in terms of obtained influence spread.

  \item Comparable with the greedy algorithm, IMRank with a ``good'' initial ranking such as Degree, Strength, and PageRank show indistinguishable performance, shown in a single curve in Figure ~\ref{fig:initialSort:ISWIC}. A good initial ranking prefers nodes with high influence;

  \item IMRank with a ``neural'' initial ranking such as random also shows fair performance, slightly poorer than the greedy algorithm and IMRank with a good initial ranking. A neural initial ranking prefers no nodes;

  \item IMRank with a ``bad'' initial ranking such as InversedDegree shows remarkably improvements upon the initial ranking alone but is dominated by the greedy algorithm. A bad initial ranking prefers nodes with low influence.
\end{itemize}
Therefore, IMRank is robust to the selection of initial ranking, and IMRank works well with an initial ranking that prefers nodes with high influence, which could be obtained efficiently in practice. A possible explanation is the priori bias that a high-ranked node earns more allocated influence than a low ranked node, even with the same topological circumstance. Therefore, it helps IMRank to converge to a good ranking if the nodes with high influence are initially ranked high. 

Among the three ``good'' initial rankings with indistinguishable performance, Degree offers a good candidate of initial ranking, since computing the initial ranking consumes a large part in the total running time of IMRank, as shown in Figure~\ref{fig:initialSort:TimeWIC}. 

\begin{figure}[t]
\centering
\subfigure[Influence spread]
{\label{fig:initialSort:ISWIC} 
\includegraphics[width=0.65 \linewidth]{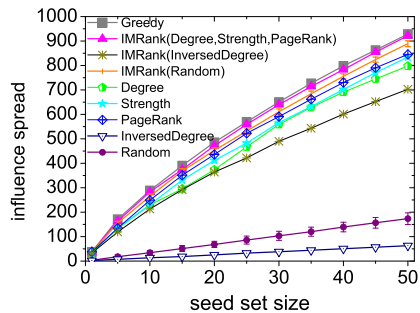}}
\subfigure[Running time when $k=50$]
{\label{fig:initialSort:TimeWIC}
\includegraphics[width=0.65 \linewidth]{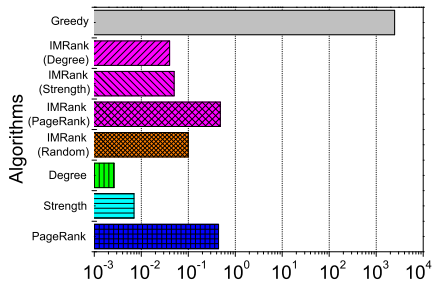}}
\caption{\label{fig:initialSort}Comparison between IMRank with different initial rankings under the WIC model.}
\end{figure} 

\section{Advanced IMRank}\label{section:advance}

In the LFA strategy, a node $v_{r_i}$ is only allowed to allocate its influence to a higher ranked neighboring node $v_{r_j}$, implying the assumption that a node can only be activated by higher ranked neighbors. The assumption ignores the possibility that a lower ranked neighbor $v_{r_j}$ activates a higher ranked node $v_{r_i}$ by playing the role of an intermediate agent of another node $v_{r_i}$ with $k<i$. Take the path $\langle v_1, v_3, v_2\rangle$ in Figure~\ref{fig:last-to-first-strategy} for example. After $v_1$ is selected as a seed, it activates $v_3$ and then $v_3$ as an intermediate agent activates $v_2$.

To combat the above problem, we propose a generalized LFA strategy that trades a slight increase in running time for better accuracy in estimating $M_r$, through exploring more paths that potentially propagate influence. This generalized LFA strategy can further improve the performance of IMRank on influence spread.
In order to avoid duplicate computing that a long path is contained in another longer path, we introduce the \emph{influence paths} as corrections.

\begin{definition}{\textbf{Influence path:}}
Given a ranking $r$, a simple path $d_r(v_{r_j},v_{r_i}) = \langle v_{r_j}, \cdots, v_{r_i}\rangle$ is called an influence path if $v_{r_j}$ is the only node along the path that is ranked higher than $v_{r_i}$.
\end{definition}
\begin{lemma}
A directed edge $\langle v_{r_{j}},v_{r_{i}}\rangle$ is an influence path if $j<i$.
\end{lemma}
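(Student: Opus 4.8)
The plan is to prove the lemma by directly verifying that a single directed edge $\langle v_{r_j}, v_{r_i}\rangle$ with $j<i$ satisfies the definition of an influence path. This should be essentially a definitional unwinding rather than a substantive argument, so I would keep it short and make the two required conditions explicit.

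First I would observe that the definition of an influence path requires two things: that $d_r(v_{r_j}, v_{r_i})$ be a simple path, and that $v_{r_j}$ be the \emph{only} node along the path ranked higher than $v_{r_i}$. So I would split the verification into these two checks. For the first check, I note that a single directed edge $\langle v_{r_j}, v_{r_i}\rangle$ is trivially a simple path of length one: its vertex sequence is $\langle v_{r_j}, v_{r_i}\rangle$, which contains no repeated vertices. For the second check, I would argue that the only vertices lying on this path are its two endpoints $v_{r_j}$ and $v_{r_i}$. Since the ranking orders nodes by rank index and $j<i$ means $v_{r_j}$ has strictly higher rank than $v_{r_i}$, the endpoint $v_{r_j}$ is ranked higher than $v_{r_i}$, while $v_{r_i}$ is of course not ranked higher than itself. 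Hence $v_{r_j}$ is precisely the unique node on the path ranked higher than $v_{r_i}$, satisfying the defining condition.

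Combining the two checks, the edge meets both requirements in the definition, so it qualifies as an influence path, completing the argument.

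Honestly, there is no real obstacle here: the statement is a degenerate base case of the influence-path notion, and the hypothesis $j<i$ exactly encodes ``$v_{r_j}$ ranked higher than $v_{r_i}$'' under the convention that smaller rank index means higher rank. The only thing to be careful about is matching the paper's rank convention (rank $1$ is highest, so $j<i$ indeed means $v_{r_j}$ outranks $v_{r_i}$) and stating explicitly that the path has no intermediate vertices, so the ``only node ranked higher'' condition is vacuously constrained to the two endpoints. I would therefore present this as a one-line observation with the two conditions spelled out, rather than as a multi-step derivation. The value of stating it as a lemma is to serve as the seed case for a later recursive or inductive construction of longer influence paths, so I would phrase the proof to emphasize that it anchors that construction.
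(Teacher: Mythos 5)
Your proof is correct and is exactly the definitional unwinding the paper intends: in fact, the paper states this lemma without any proof at all (the proof environment that follows belongs to the next lemma), treating it as immediate from the definition of influence path. Your two checks --- a single edge $\langle v_{r_j}, v_{r_i}\rangle$ is trivially a simple path, and with no intermediate vertices the hypothesis $j<i$ makes $v_{r_j}$ the unique node on the path ranked higher than $v_{r_i}$ --- are precisely what makes the statement immediate, so there is nothing to add or correct.
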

\begin{lemma}
A node $v_{r_{i}}$ allocates influence score to another node $v_{r_{j}}$ only along an influence path $d_r(v_{r_{j}},v_{r_{i}})$, if exists any.
\end{lemma}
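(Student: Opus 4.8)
The plan is to prove the lemma by giving an operational meaning to the phrase ``allocates influence score'' and then showing that this meaning is structurally forced to respect influence paths. The quantity that eventually flows from $v_{r_i}$ back to $v_{r_j}$ is, by construction of the LFA computation, the estimated contribution of the event ``$v_{r_j}$ is credited with activating $v_{r_i}$'' to the value $M_r(v_{r_j})$. First I would recall the two basic rules that define $M_r$: each node is activated only by higher-ranked nodes, and among several candidate activators the highest-ranked one takes priority. Under these rules, any contribution to $v_{r_j}$'s credit for $v_{r_i}$ must be carried by a simple propagation path $\langle v_{r_j}, w_1, \ldots, w_m, v_{r_i}\rangle$, so the problem reduces to characterizing exactly which such paths are counted.

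The core of the argument is to show that every counted path is an influence path, i.e., that $v_{r_j}$ is the only node on it ranked higher than $v_{r_i}$. I would trace the activation chain backward from $v_{r_i}$: the priority rule attributes the activation of $v_{r_i}$ to the first node along the chain that is ranked higher than $v_{r_i}$. If some internal node $w_t$ (with $w_t \neq v_{r_j}$) were ranked higher than $v_{r_i}$, then the suffix $\langle w_t, \ldots, v_{r_i}\rangle$ would already be a path from a higher-ranked node to $v_{r_i}$, so the activation reaching $v_{r_i}$ through it would be attributed to the sub-path starting at $w_t$ rather than to $v_{r_j}$; delivering score all the way back to $v_{r_j}$ along the full path would then duplicate what is already captured when $v_{r_i}$ delivers to $w_t$. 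Since the generalized strategy is designed precisely to avoid this duplicate computing, such longer paths are suppressed, and the only node ranked higher than $v_{r_i}$ on a counted path is its endpoint $v_{r_j}$ --- which is exactly the influence-path condition. I would formalize this as an induction on path length, using Lemma~1 as the base case (a single edge $\langle v_{r_j},v_{r_i}\rangle$ with $j<i$ is an influence path, and the basic delivery $\eta(v_{r_j},v_{r_i})$ acts along it), and, in the inductive step, appending an edge at a node ranked \emph{below} $v_{r_i}$ so that the internal-node condition is preserved while the last-to-first scan keeps the path simple.

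The step I expect to be the main obstacle is making the attribution argument fully rigorous: I must pin down the precise accounting by which each activation of $v_{r_i}$ is assigned to a \emph{unique} higher-ranked node, and verify that truncating every propagation chain at its first node ranked above $v_{r_i}$ partitions the contributions cleanly, so that no influence is lost and none is counted twice. The delicate point is that this truncation must respect the priority rule and not merely reachability; establishing that the first-higher-ranked-node cut is the correct and exhaustive bookkeeping --- rather than one arbitrary choice among many --- is the substance on which the ``only along an influence path'' claim ultimately rests.
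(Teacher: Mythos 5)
Your proposal is correct and follows essentially the same route as the paper: both arguments reduce to a case analysis of a simple path from $v_{r_j}$ to $v_{r_i}$, excluding non-influence paths by appealing to the two activation rules and to the need to avoid duplicate counting. One difference is worth flagging. The paper splits the case of an internal node $v_{r_k}$ ranked above $v_{r_i}$ into two sub-cases with distinct justifications: if $k<j$, the path is excluded because the cascade from $v_{r_j}$ can never pass through $v_{r_k}$ at all ($v_{r_k}$ is ``triggered earlier'' and, by the first rule, cannot be activated by the lower-ranked $v_{r_j}$), so the path carries zero influence from $v_{r_j}$; only when $j<k<i$ is the exclusion genuinely about duplication, since the score that $v_{r_i}$ delivers to $v_{r_k}$ is later re-allocated by $v_{r_k}$ up to $v_{r_j}$, so the full path is already captured indirectly. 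Your proposal folds both sub-cases into a single ``attribution to the first higher-ranked node / duplication'' argument; the conclusion is unaffected, but the uniform wording slightly misstates the first sub-case, where counting the full path would credit $v_{r_j}$ with influence it can never exert (phantom credit blocked by the first rule) rather than re-count influence that reaches $v_{r_j}$ by another route. Your induction on path length is an extra formalization device the paper does not use; the paper's proof is a direct, and equally informal, case analysis, so the difficulty you flag about making the attribution bookkeeping fully rigorous applies just as much to the paper's own proof.
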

\begin{proof}
Consider a path $d_r(v_{r_j},v_{r_i})$. If $j>i$, $v_{r_{j}}$ has no chance to trigger a cascade to activate $v_{r_{i}}$, immediately or eventually. Therefore a path is not negligible only when $j<i$. Furthermore, if there is an intermediate node $v_{r_{k}}$ with $k<j$, there is no chance that $v_{r_{j}}$ activates $v_{r_{i}}$ along this path since $v_{r_{k}}$ is triggered earlier, thus such a path can be neglected. If there exists an intermediate node $v_{r_{k}}$ with $j<k<i$, the influence allocated from $v_{r_{k}}$ to $v_{r_{j}}$ already contains the fraction that $v_{r_{j}}$ activates $v_{r_{i}}$, as discussed in Section~\ref{subsection:calculate M}. Thus such a path should not be counted to avoid duplicate computing.
\end{proof}

We denote $\rho_r(v_{r_i},v_{r_j})$ to the probability that $v_{r_i}$ is activated by $v_{r_j}$ through any influence path. $\rho_r(v_{r_i},v_{r_j})$ is equal to the probability that any influence path from $v_{r_j}$ to $v_{r_i}$ has all its nodes activated, discounted by the probability that $v_{r_i}$ is already activated before $v_{r_j}$ attempts. $\rho_r(v_{r_i},v_{r_j})$ can be obtained as follows,
\begin{equation}\label{probabilitywithrank}
\begin{aligned}
\rho_r(v_{r_i},v_{r_j}) = &\left( \prod_{1\leq k<i}  \left(1-\rho_r(v_{r_i},v_{r_k}) \right)\right)\\  &\left(1-\left(\prod_{d_r(v_{r_j},v_{r_i})
\in D_r(v_{r_j},v_{r_i})} (1-p(d_r(v_{r_j},v_{r_i}))\right)\right),
\end{aligned}
\end{equation}
where $p(d_r(v_{r_j},v_{r_i}))=\prod_{\langle v_{r_x}, v_{r_y} \rangle \in  d_r(v_{r_j},v_{r_i})} p_{v_{r_x},v_{r_y}}$ is the joint probability that $v_{r_j}$ activates all nodes on an influence path $d_r(v_{r_j},v_{r_i})$, and $D_r(v_{r_i},v_{r_j})$ denotes the set of all the influence paths starting from $v_{r_j}$ and ending with $v_{r_i}$.
To summarize, the generalized LFA strategy calculates marginal influence spread by replacing the allocation method: a node $v_{r_i}$ delivers a fraction of its influence to each higher-ranked node instead of each adjacent higher-ranked node, with $p(\cdot, v_{r_i})$ replaced by $\rho(\cdot, v_{r_i})$.

Although searching all influence paths takes exhausting computation, we can safely limit the higher-order correction to a second-order or third-order correction to avoid expensive computation. Specifically, we prune paths longer than $l$ hops which are expensive to count but propagate influence with low probabilities. Therefore the marginal influence spread allocation operation is restricted within a local region, avoiding exploring the whole network. Obviously $l=1$ makes the generalized LFA strategy collapsed into the LFA strategy.




The time and space complexity of IMRank with the generalized LFA strategy mainly depends on $l$. Let $d_{max}$ denote to the largest number of paths with length of $l$ ends in an arbitrary node. The time required for scanning any node is $O(d_{max} \log d_{max})$, used for searching candidate nodes, sorting candidate nodes by their ranks, and allocating influence. Hence the total time complexity of IMRank is $O(n T d_{max}\log d_{max})$, where $T$ is the number of iterations needed for the convergence of IMRank. Our experiment results show that, IMRank always converges with a fairly small $T$ significantly smaller than $k$, e.g, $T<10$ when $k=50$. Since $d_{max}$ is usually much smaller than $n$, e.g. $d_{max}$ is just the largest indegree among all nodes when $l=1$, the time complexity of IMRank is low. Talking about the space (memory) complexity, IMRank only needs to store the value of $M_r(v)$ for each node, which takes $O(n)$ space in memory. Hence the space complexity is also low. 

Figure~\ref{fig:l:performance} shows the impact of $l$ on the performance of IMRank, measured on the NEPT network with the WIC model and $k=50$ for example. We compare the results of IMRank with Degree and Random initial rankings since the results for other initial rankings are similar. It shows that, when $l$ increases from $1$ to $2$, there is a visible increase on the performance of IMRank, measured with influence spread. It indicates that, a larger $l$ indeed makes the estimation of marginal influence spread more accurate, and further makes IMRank obtain better ranking. When $l$ increases beyond $2$, the performance of IMRank converges fast, because the propagation probabilities of long paths decrease exponentially with the length. Hence, long influence paths impact little on the final estimation. As shown in the inset figure of Figure~\ref{fig:l:performance}, the running time of IMRank increases rapidly as $l$ increases, since much more paths need searching. Balancing the trade-off between the influence spread and running time of IMRank, a suitable $l$ can be selected based on the practical requirement on accuracy and affordable computational resource.

\begin{figure}[t]
\centering
\includegraphics[width=0.55 \linewidth]{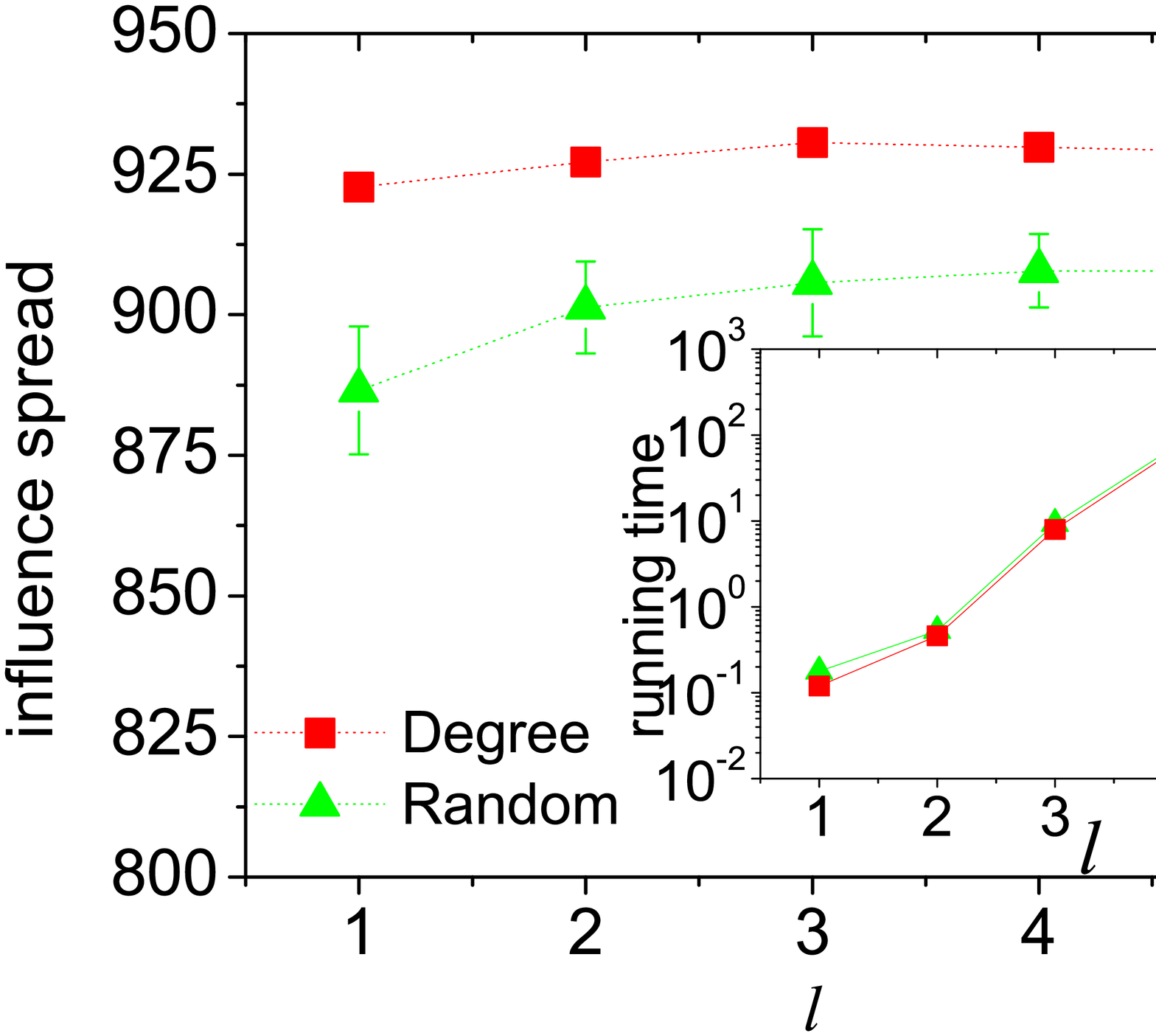}
\caption{Impact of $l$ on the performance of IMRank.}
\label{fig:l:performance}
\end{figure} 

\section{Experiments}\label{section:experiment}

In this section, we evaluate IMRank on real-world networks by comparing IMRank with state-of-the-art influence maximization algorithms.
\begin{figure*}
\centering
\subfigure[WIC model]
{\label{fig:PHY:WICIS} 
\includegraphics[width=0.3 \linewidth]{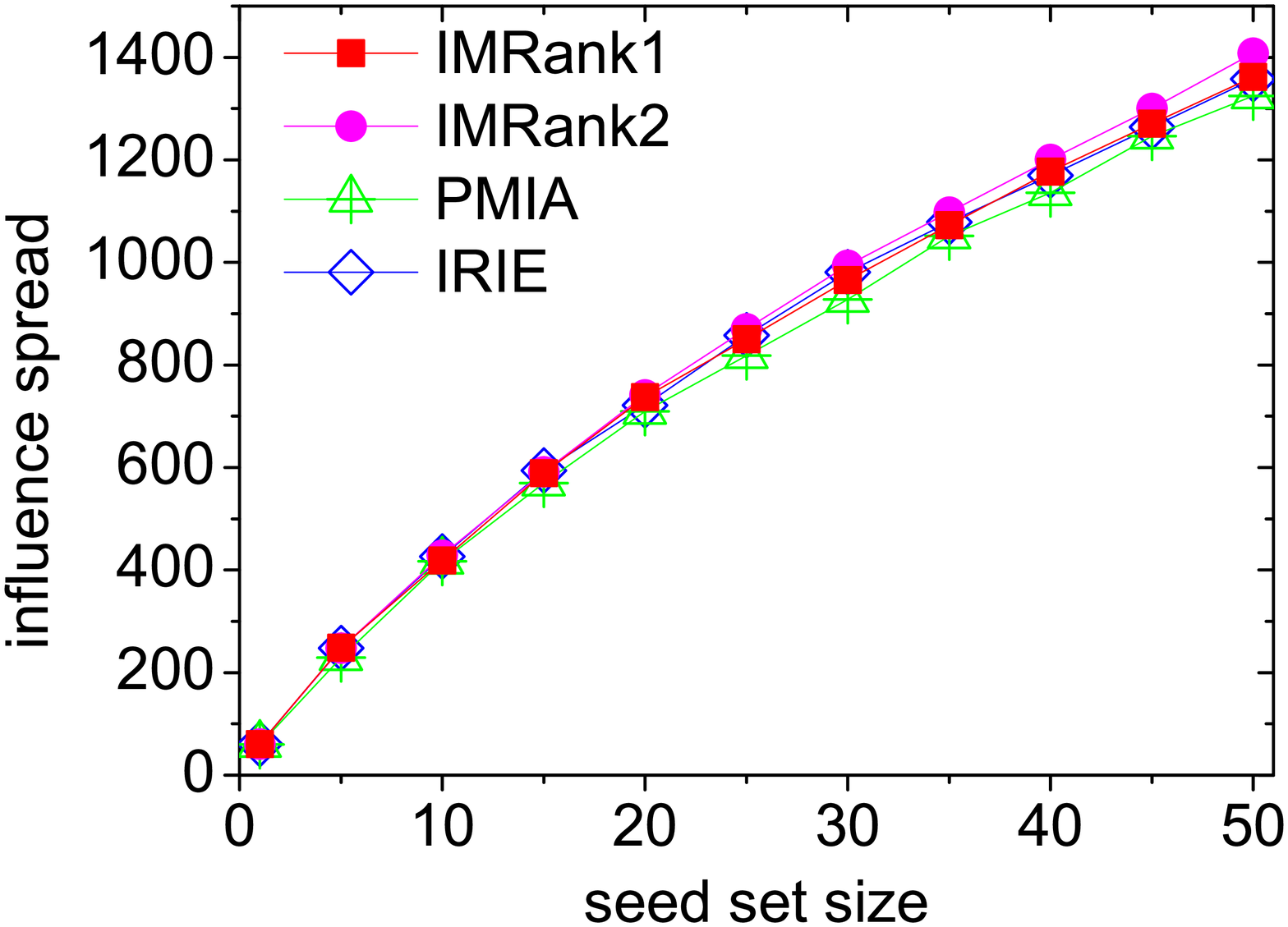}}
\subfigure[TIC model]
{\label{fig:PHY:TICIS}
\includegraphics[width=0.3 \linewidth]{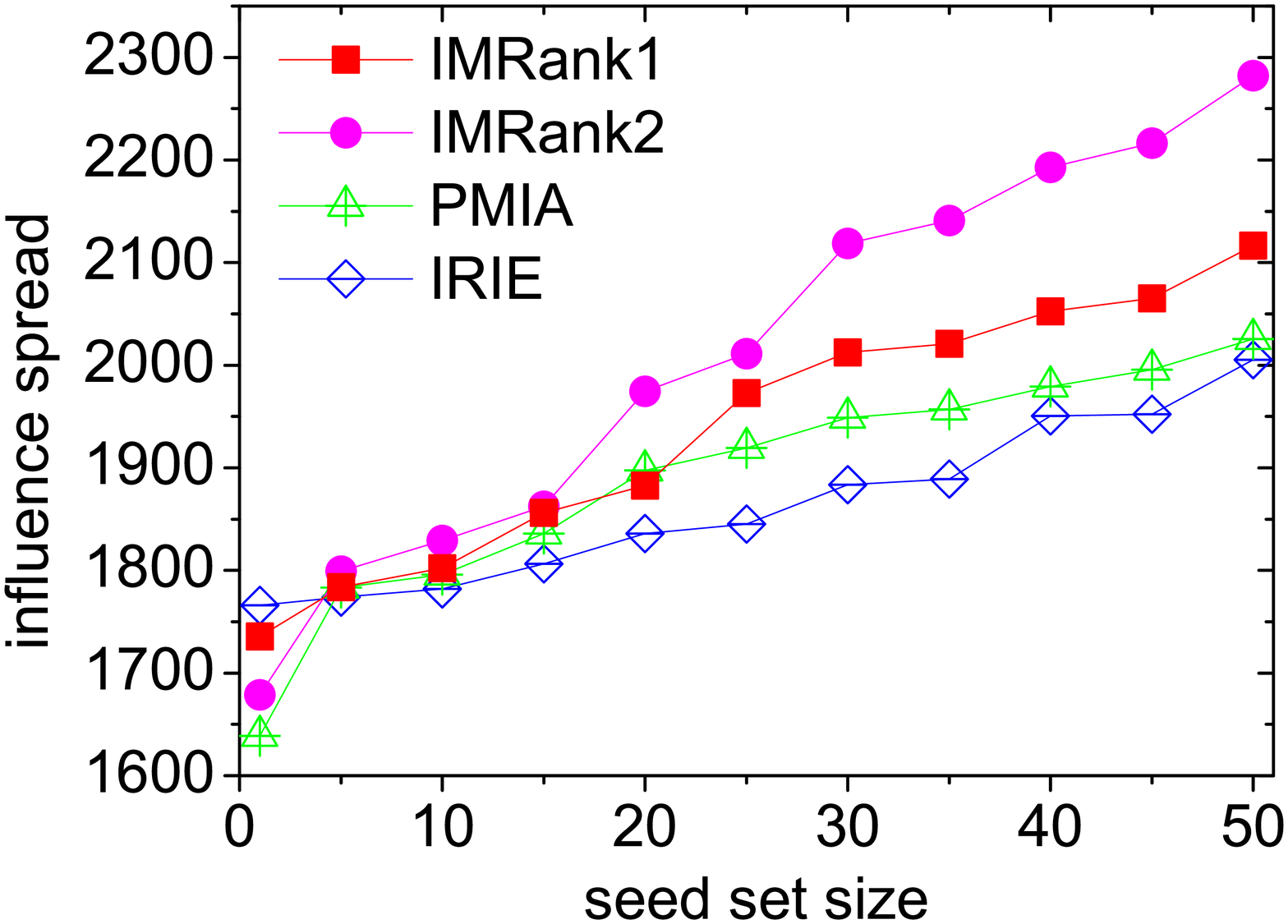}}
\subfigure[Running Time]
{\label{fig:PHY:TIME}
\includegraphics[width=0.3 \linewidth]{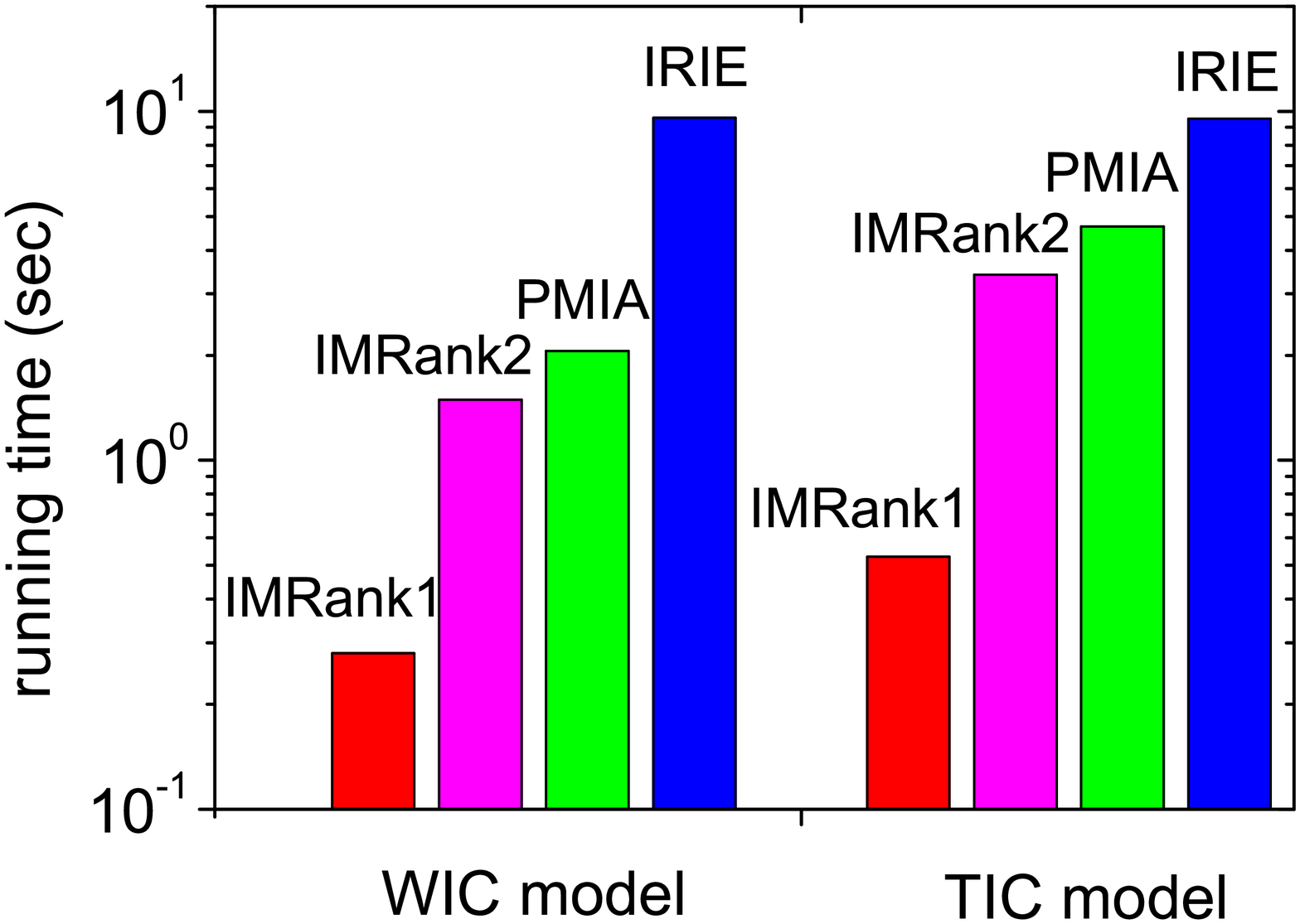}}
\caption{\label{fig:PHY} Influence spread and running time on the PHY dataset}
\end{figure*}

\begin{figure*}[t]
\centering
\subfigure[WIC model]
{\label{fig:DBLP:WIC} 
\includegraphics[width=0.3 \linewidth]{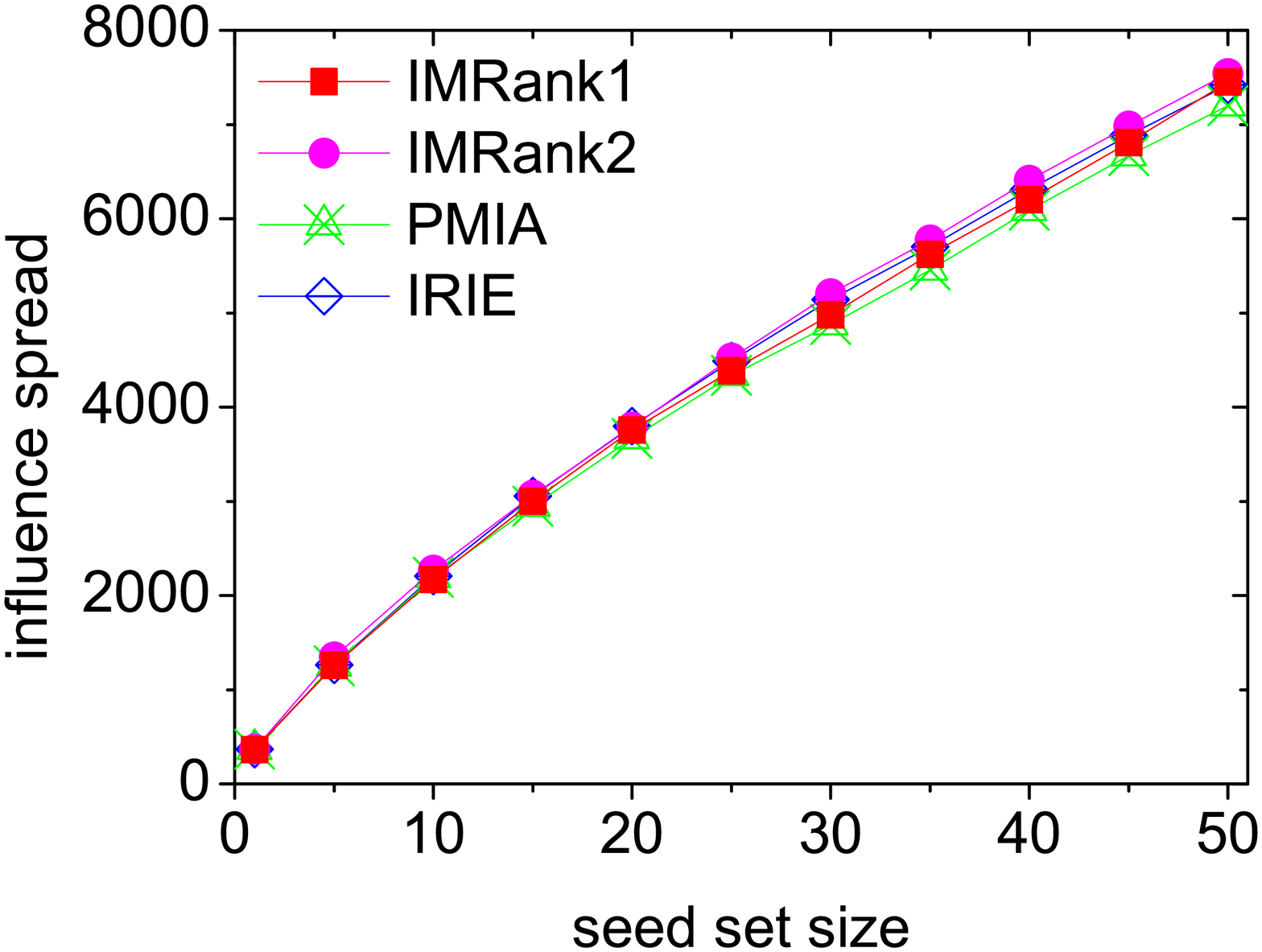}}
\subfigure[TIC model]
{\label{fig:DBLP:TIC}
\includegraphics[width=0.3 \linewidth]{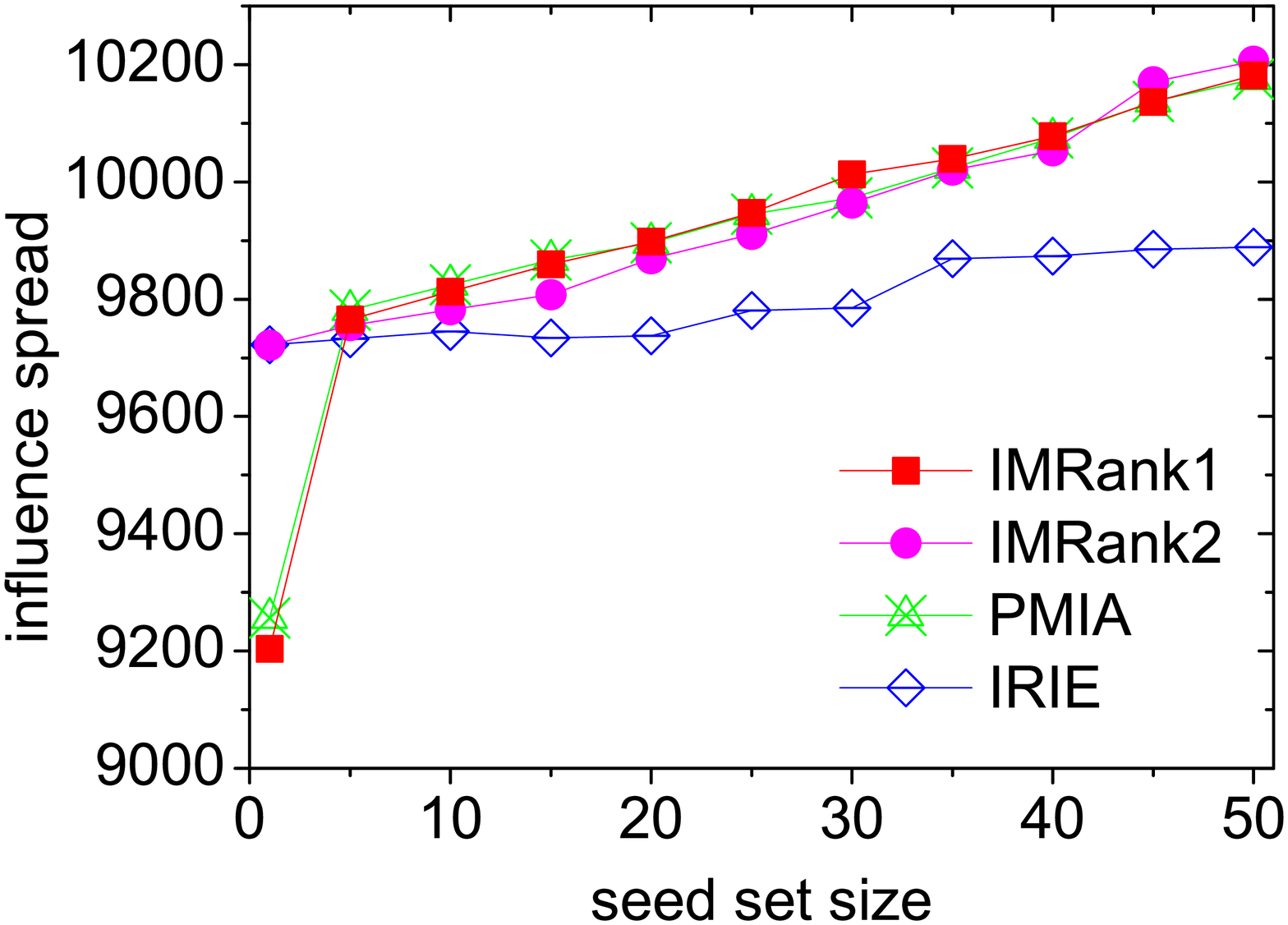}}
\subfigure[Running Time]
{\label{fig:DBLP:TIME}
\includegraphics[width=0.3 \linewidth]{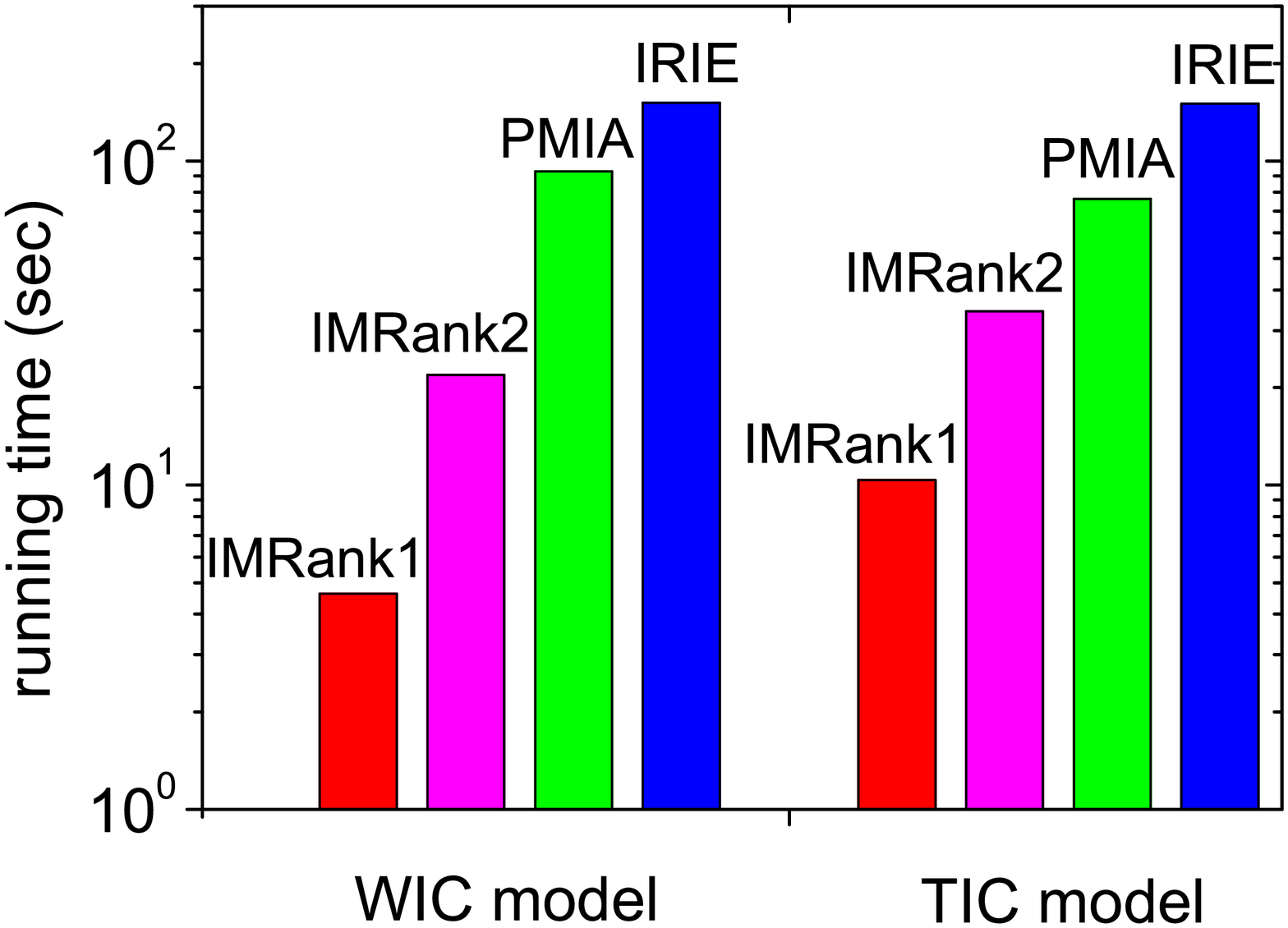}}
\caption{\label{fig:DBLP} Influence spread and running time on the DBLP dataset}
\end{figure*}

\subsection{Experimental Setup}
\subsubsection{Diffusion models}
Experiments are conducted under two widely-used independent cascade models:
\begin{itemize}
  \item \textbf{Weighted independent cascade (WIC) model }~\cite{Kempe2003}: Each edge $(u, v)$ is assigned a propagation probability $p(u,v)=1/d_v$, where $d_v$ is the indegree of node $v$.
  \item \textbf{Trivalency independent cascade (TIC) model }~\cite{Chen2010}: Each edge is assigned a propagation probability selected from \{0.1,0.01,0.001\} in a uniform random manner, indicating high, medium and low levels of influence.
\end{itemize}

\subsubsection{Baseline algorithms} The compared algorithms include two implementations of IMRank and two state-of-the-art heuristic algorithms, i.e., PMIA and IRIE. Details are as follows:
\begin{itemize}
\item \textbf{IMRank1}:
This is the IMRank with Degree as initial ranking method and $l=1$. According to the analysis of section~\ref{subsection:convergence}, we set its stopping criteria as when the sets of top-$k$ nodes are the same during two successive iterations or the iteration runs 10 rounds.
\item \textbf{IMRank2}:
This is the IMRank with Degree as initial ranking method and $l=2$, with the same stopping criteria to IMRank1.
\item \textbf{PMIA}:
This heuristic algorithm estimates influence spread based on maximum influence paths~\cite{Chen2010}. We use the recommended parameter setting $\theta=1/320$.
\item \textbf{IRIE}:
This heuristic algorithm integrates influence ranking with influence estimation~\cite{Jung2012}. The parameters $\alpha$ and $\theta$ are set to be $0.7$ and $1/320$, and the maximum times of iterations for initial round and subsequent rounds are respectively 20 and 5 as recommended.
\end{itemize}

\begin{table}[t]
\centering \caption{Statistics of test networks} \label{table:statisticsoftestnetworks}
\begin{tabular}{p{1.5cm}rrc}
\toprule
Datasets          & \#Nodes    & \#Edges     & Directed? \\
\midrule
PHY               & 37K        & 231K        &undirected \\
DBLP              & 655K       & 2M        &undirected \\
EPINIONS          & 76K        & 509K      & directed  \\
DOUBAN            & 552K       & 22M        &directed \\
LIVEJOURNAL       & 4M         & 69M        &directed \\
\bottomrule
\end{tabular}
\end{table}

\subsubsection{Datasets}
Experiments are conducted on five real-world networks, two undirected scientific collaboration networks and three directed online social networks. Table~\ref{table:statisticsoftestnetworks} gives basic statistics of those networks. One of the two scientific collaboration networks, denoted as PHY, is obtained from the complete list of papers of the Physics section of the e-print arXiv website. The other one, denoted as DBLP, is extracted from the DBLP Computer Science Bibliography~\footnote{http://www.informatik.uni-trier.de/$\sim$ley/db/}. The three online social networks are EPINIONS, DOUBAN, and LIVEJOURNAL~\footnote{EPINIONS and LIVEJOURNAL can be downloaded from http://snap.stanford.edu/data/. DOUBAN can be obtained on demand via email to the authors.}, respectively extracted from the websites of epinions.com, douban.com and livejournal.com. In the EPINIONS dataset, an edge between two users $u$ and $v$, denoted as $\langle u,v\rangle$, represents that user $u$ trusts user $v$. In the DOUBAN dataset~\cite{Huang2012},
an edge between two users $u$ and $v$ represents that user $u$ follows user $v$. In the LIVEJOURNAL network~\cite{Backstrom2006},
an edge between two users $u$ and $v$ represents that user $u$ declares user $v$ as his/her friend. We choose these five networks based on the consideration that these networks possess various kinds of relationships and different sizes ranging from hundreds of thousands edges to millions of edges. Actually we test our method on many other networks. Limited by space, results on these networks are not included in this paper.

All experiments are conducted on a server with 1.9GHz Quad-Core AMD Opteron(tm) Processor 8347HEx4 and 64G memory.

\begin{figure*}[t]
\centering
\subfigure[WIC model]
{\label{fig:EPINIONS:WIC} 
\includegraphics[width=0.3 \linewidth]{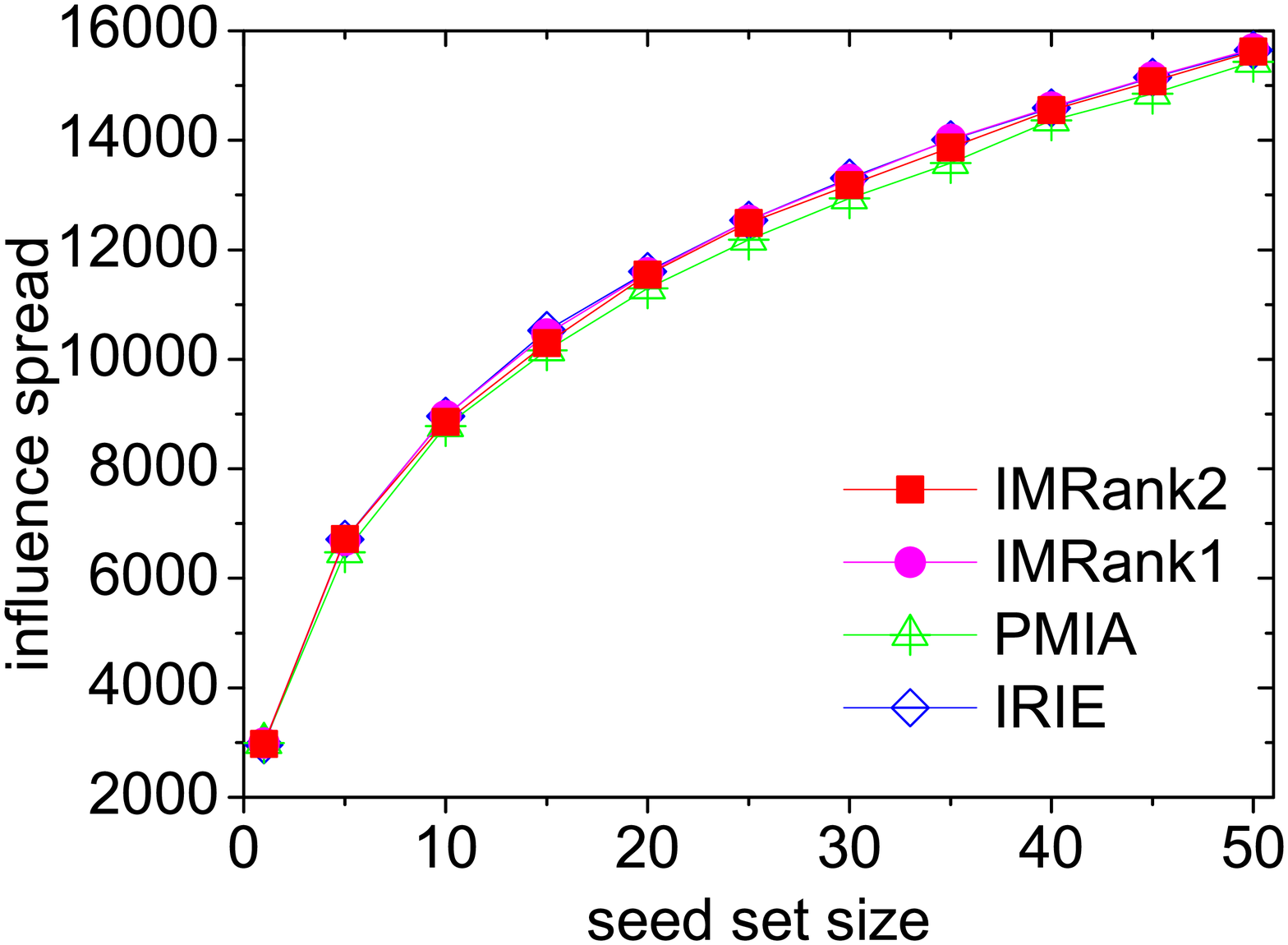}}
\subfigure[TIC model]
{\label{fig:EPINIONS:TIC}
\includegraphics[width=0.3 \linewidth]{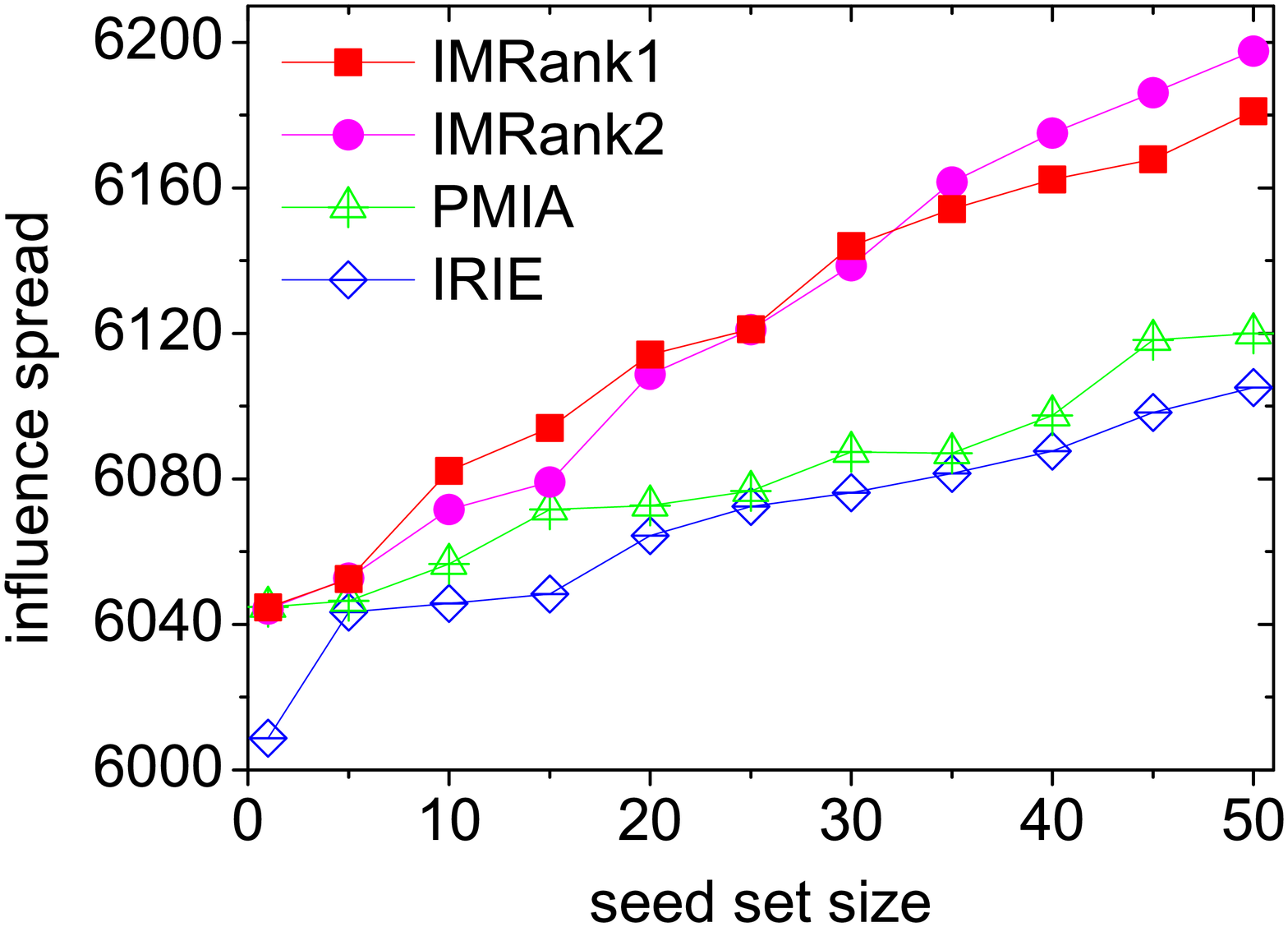}}
\subfigure[Running Time]
{\label{fig:EPINIONS:TIME}
\includegraphics[width=0.3 \linewidth]{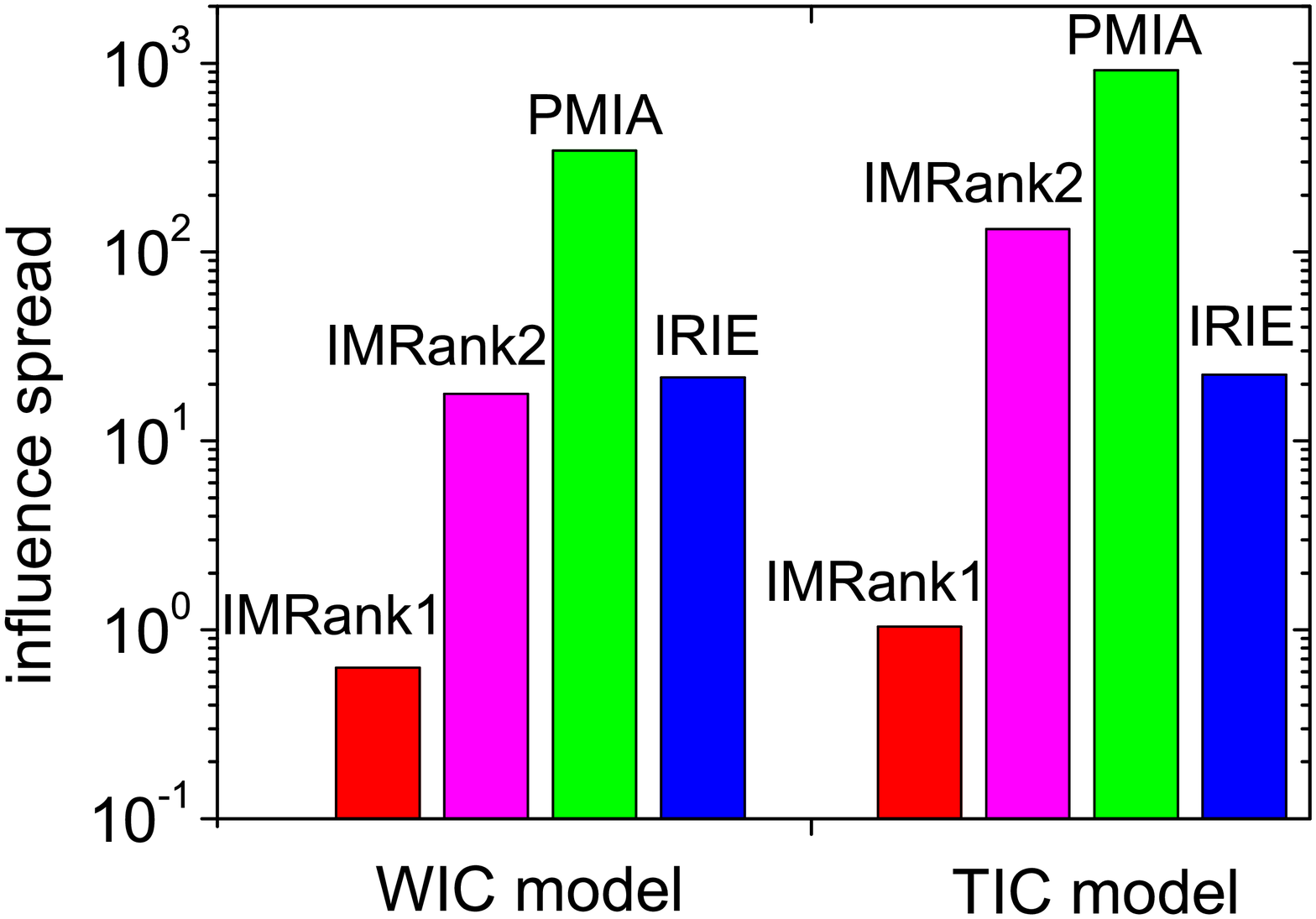}}
\caption{\label{fig:EPINIONS} Influence spread and running time on the EPINIONS dataset}
\end{figure*}

\begin{figure*}[t]
\centering
\subfigure[DOUBAN]
{\label{fig:DOUBANLJ:WIC} 
\includegraphics[width=0.3 \linewidth]{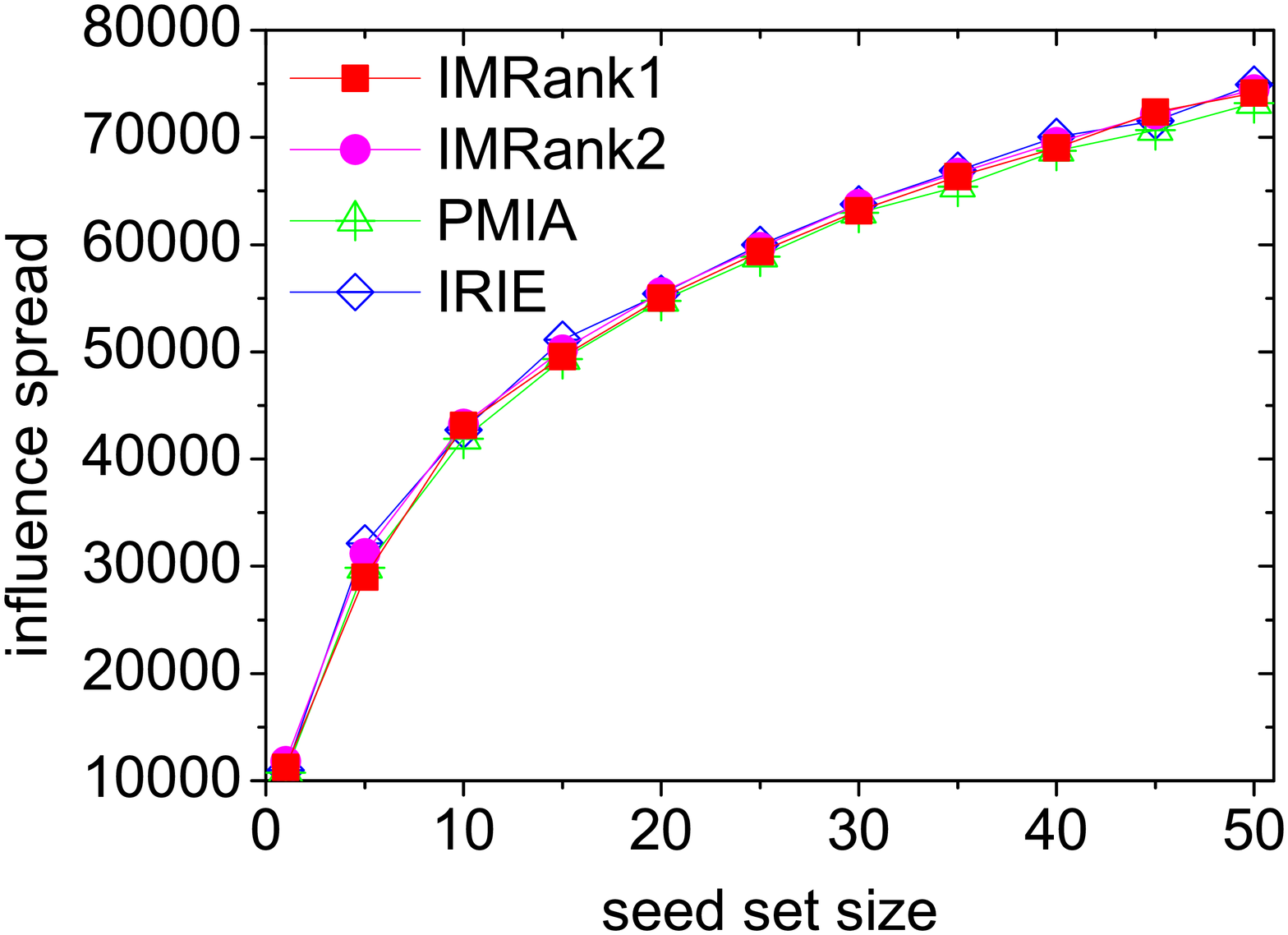}}
\subfigure[LIVEJOURNAL]
{\label{fig:DOUBANLJ:WIC}
\includegraphics[width=0.3 \linewidth]{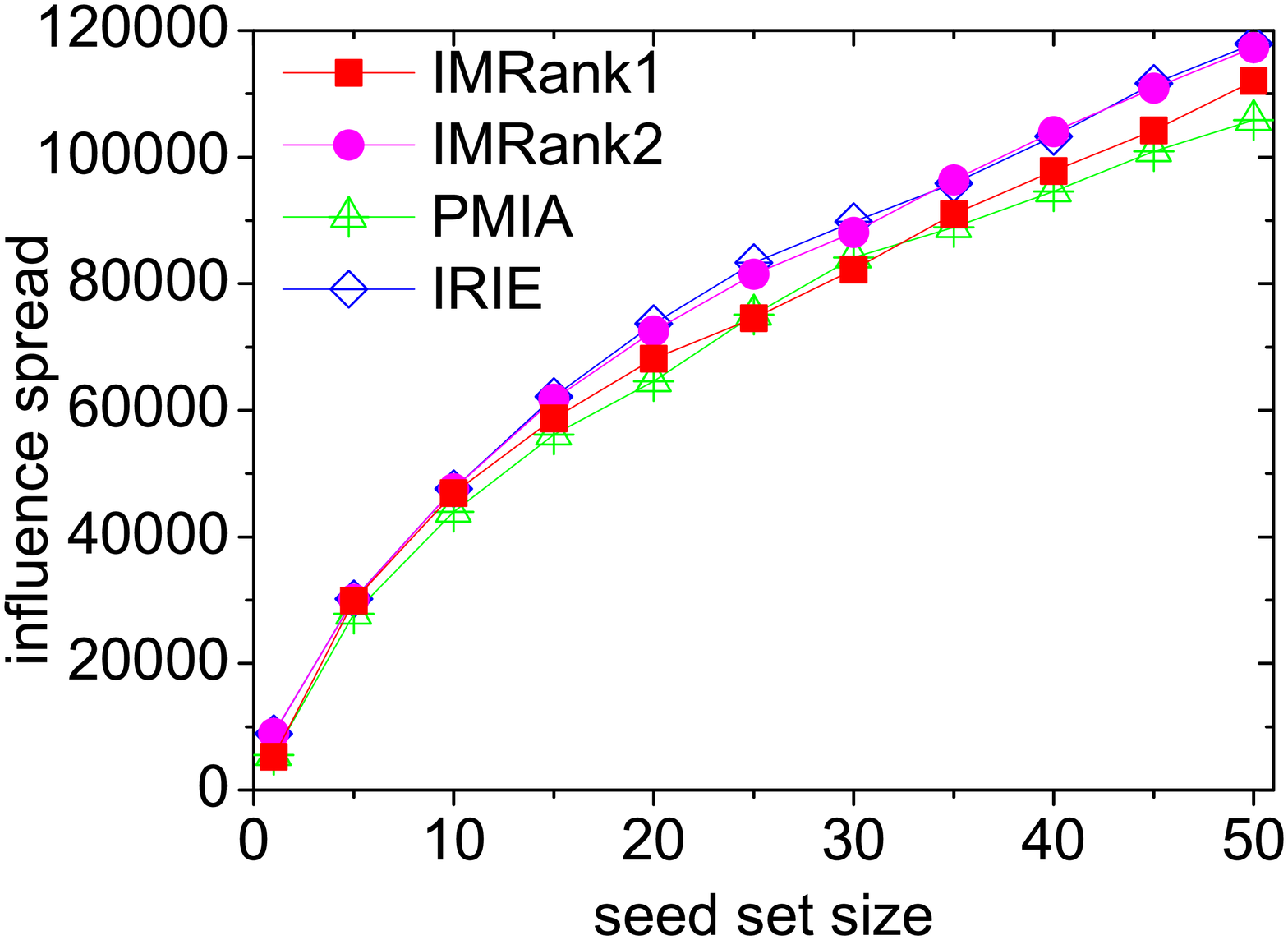}}
\subfigure[Running Time]
{\label{fig:DOUBANLJ:TIME}
\includegraphics[width=0.3 \linewidth]{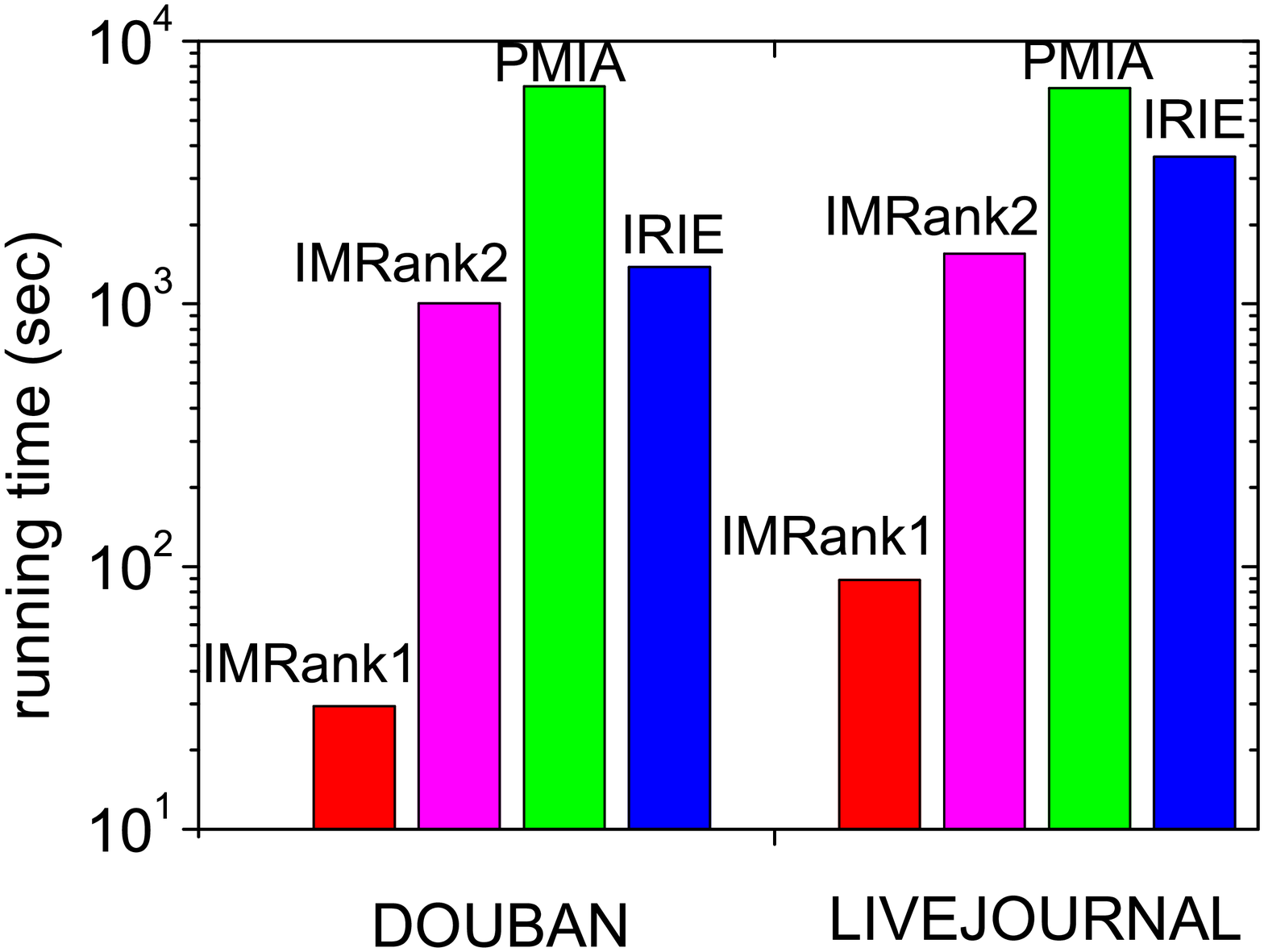}}
\caption{\label{fig:DOUBANLJ} Influence spread and running time on the DOUBAN and LIVEJOURNAL datasets}
\end{figure*}

\subsection{Experimental results}
We evaluate IMRank on real-world networks by comparing it with state-of-the-art algorithms. Evaluation metrics include influence spread and running time. For the comparison of obtained influence spread, we test the cases of $k=1,5,10,15,20,25,30,35,40,45,50$. For the comparison of running time, we focus on the typical case $k=50$. Each figure of Figures~\ref{fig:PHY}-\ref{fig:DOUBANLJ} shows the results on a certain network. The first two subfigures give the results of influence spread under the WIC model and the TIC model respectively, and the last one gives the results of running time. 

Figure~\ref{fig:PHY} shows the experimental results on the PHY dataset. Under the WIC model, IMRank2 achieves the best influence spread, followed by IMRank1, outperforming PMIA and IRIE. The distinguished accuracy of IMRank2 is attributed to the fact IMRank2 explores more influence paths to accurately estimate ranking-based marginal influence spread. PMIA exhibits the worst performance, 6.3\% lower influence spread than IMRank2 when $k=50$. Under the TIC model, as shown in Figure~\ref{fig:PHY:TICIS}, similar results are obtained and the gaps between those algorithms become more visible. For influence spread, IMRank2 and IMRank1 are the top two algorithms while PMIA slightly outperforms IRIE. The influence spread obtained by IMRank2 is 13.8\% and 12.7\% higher than that obtained by IRIE and PMIA respectively. Moreover, as shown in Figure~\ref{fig:PHY:TIME}, IMRank1 and IMRank2 run faster than the competing algorithms under both WIC model and TIC model. IMRank1 is the fastest one followed by IMRank2 which achieves higher influence spread at the cost of longer running time, while PMIA takes the third place and IRIE runs slowest. In particular, the running times of IRIE and PMIA are 30 times and 10 times longer than the running time of IMRank1 under the WIC model respectively, and 18 times and 9 times longer than that of IMRank1 under the TIC model. With the running time dramatically reduced, IMRank1 still achieves better influence spread which is about 5.5\% and 4.5\% higher than that of IRIE and PMIA respectively. The consistent performance of IMRank1 and IMRank2 demonstrates the effectiveness of IMRank. The inconsistent performance of PMIA and IRIE under the two diffusion models illustrates that both PMIA and IRIE are unstable.

Figure~\ref{fig:DBLP} shows the results on DBLP, a network with two millions edges. The performance of the four algorithms on this network is similar to their performance on PHY dataset. For the WIC model, IMRank2 achieves the highest influence spread and IMRank1 is the fastest one. In particular, when $k=50$, the highest influence spread is achieved by IMRank2 and its running time is less than PMIA and IRIE. IMRank1 obtains similar influence spread to PMIA and its running time is one order of magnitude smaller than that of PMIA. For the TIC model, IMRank1, IMRank2 and PMIA achieve very similar influence spread, which is significantly higher than the influence spread achieved by IRIE. Moreover, IMRank1 runs nearly 8 times and 13 times faster than PMIA and IRIE.

Figure~\ref{fig:EPINIONS} gives the results on EPINIONS, a social network with more than half a million edges. For the WIC model, IMRank1 and IMRank2 run faster than PMIA and IRIE. In particular, comparative to PMIA, IMRank1 reduces the running time in more than two orders of magnitudes and IMRank2 reduces the running time in more than one order of magnitude. For the TIC model, IMRank2 achieves the best influence spread and IMRank1 takes the second place. Both IMRank1 and IMRank2 significantly outperform PMIA and IRIE. Moreover, the running time of IMRank1 is only 0.1\% of the running time of PMIA and 5\% of that of IRIE. With similar running time, IMRank2 achieves significant higher influence spread than that of PMIA and IRIE.


Figure~\ref{fig:DOUBANLJ} shows the results on the DOUBAN and LIVEJOURNAL datasets. The number of edges of DOUBAN and LIVEJOURNAL is $22$ millions and $69$ millions respectively. Here we only give the results under the WIC model. On the DOUBAN network, the four algorithms achieve comparable influence spread. However, IMRank1 runs more than two orders of magnitude faster than PMIA and more than one order of magnitude faster than IRIE. On the LIVEJOURNAL network, IMRank2 and IRIE have similar influence spread, while IMRank1 follows and PMIA achieves the lowest influence spread. Note that IMRank2 runs faster than IRIE, and IMRank1 runs much faster than PMIA. We do not show the results under the TIC model since no visible difference is observed among the four tested algorithms. This is due to the fact that selecting one influential node always achieves a very large influence spread on DOUBAN and LIVEJOURNAL networks, and no increase of influence spread can be gained by adding a new seed. Such phenomenon has been observed and discussed in~\cite{Kempe2003} and~\cite{Chen2010}. The possible reason is that the influence networks generated by the TIC model on the two networks have a relatively large strongly connected component. In addition, IRIE runs faster than PMIA on EPINIONS while PMIA runs faster than IRIE on the two scientific collaboration networks, PHY and DBLP. This demonstrates that the both PMIA and IRIE perform unstable on different networks.

%

These experiments clearly show that PMIA and IRIE perform unstable on different scenarios while IMRank consistently shows good performance. According to these experiments, PMIA always runs the slowest among the four tested algorithms on denser networks, such as EPINIONS, DOUBAN and LIVEJOURNAL. This is mainly because such networks involve lots of influence paths to calculate and store. In contrast, IRIE always performs the worst on sparser and smaller networks, PHY and DBLP. This is probably because IRIE strictly obeys the iterative ranking and iterative estimation, resulting in relatively long time in sparser and smaller networks. Different from the two algorithms, IMRank seems to perform efficient and stable among different tested cases. IMRank1 always runs more than one order of magnitude faster than PMIA and IRIE when they achieve similar influence spread. IMRank2 consistently provides better influence spread than PMIA and IRIE, but runs faster than them.

\section{Conclusions}\label{section:conclusion}

In this paper, we investigated influence maximization from a novel ranking perspective. We proposed an efficient iterative framework IMRank to explore the benefits of accurate greedy algorithms and efficient heuristic estimation of influence spread. This framework effectively tunes any initial ranking into a self-consistent ranking in an iterative manner through fully leveraging the interplay between the ranking of nodes and their ranking-based marginal influence spread. A last-to-first allocating strategy is further proposed to efficiently estimate the ranking-based marginal influence spread. This strategy is elaborately designed according to the characteristics of the independent cascade model and the ranking-based marginal influence spread. We further generalize the last-to-first allocating strategy in order to achieve more accurate estimation. We also prove the convergence of IMRank and analyze the impact of initial ranking. Moreover, IMRank always work well with simple heuristic rankings, such as degree, strength. Extensive experiments on large scale real-world social networks demonstrate the efficiency of IMRank. Its scalability outperforms the state-of-the-art heuristics while its accuracy is comparable to the greedy algorithms.

For future work, we will try to analyze the accuracy of IMRank theoretically. Moreover, we believe our proposed iterative framework is of generality for the some cases greedy algorithm is suitable for. We will try to extend it to other problems beyond influence maximization, such as diversity problem in retrieval.

\end{document}